\documentclass[12pt]{amsart}
\usepackage[applemac]{inputenc}
\usepackage[titletoc]{appendix}
\usepackage{bm}
\usepackage{pb-diagram}
\usepackage{blindtext}
\usepackage{amssymb}
\usepackage{amsfonts}
\usepackage{amsthm}
\usepackage{tikz}
\usepackage{tikz-cd}
\usepackage{caption}
\usepackage{color}
\usepackage{graphicx}
\usepackage[dvipsnames]{xcolor}
\usepackage{shadethm}
\usepackage{subfigure}

\usepackage{epigraph}
\usepackage{tikz}
\usetikzlibrary{tqft}
\usetikzlibrary{arrows}
\usetikzlibrary{cd}
\tikzset{>=latex}

\usepackage{hyperref}
\usepackage{amsrefs}

\newcommand{\NN}{{\mathbb{N}}}

\newcommand{\Lie}         {\mathcal L}

\newcommand{\calS}{\mathcal{S}}

\newcommand{\calM}{\mathcal{M}}
\newcommand{\calN}{\mathcal{N}}

\newcommand{\calF}{\mathcal{F}}

\newcommand{\Cour}[1]     {[\![#1]\!]}

\def\gpd{\,\lower1pt\hbox{$\longrightarrow$}\hskip-.24in\raise2pt
               \hbox{$\longrightarrow$}\,}

\newcommand{\Z}{\mathbb{Z}}

\newcommand{\E}{\epsilon}

\DeclareMathOperator{\Map}{Map}

\DeclareMathOperator{\Hom}{Hom}
\DeclareMathOperator{\ev}{ev}

\newcommand{\D}{\mathcal{D}}

\makeatletter

\pgfdeclareshape{genus}{
 \anchor{center}{\pgfpointorigin}
\backgroundpath{
    \begingroup
    \tikz@mode
    \iftikz@mode@fill

         \pgfpathmoveto{\pgfqpoint{-0.78cm}{-.17cm}}
     \pgfpathcurveto 
            {\pgfpoint{-0.35cm}{-.44cm}}
        {\pgfpoint{0.35cm}{-.44cm}}
        {\pgfpoint{.78cm}{-0.17cm}} 
     \pgfpathmoveto{\pgfqpoint{-0.78cm}{-0.17cm}}
     \pgfpathcurveto 
            {\pgfpoint{-0.25cm}{.25cm}}
        {\pgfpoint{.25cm}{.25cm}}
        {\pgfpoint{0.78cm}{-0.17cm}}
        \pgfusepath{fill}
        \fi

        \iftikz@mode@draw
     \pgfpathmoveto{\pgfqpoint{-1cm}{0cm}}
     \pgfpathcurveto 
            {\pgfpoint{-0.5cm}{-.5cm}}
        {\pgfpoint{0.5cm}{-.5cm}}
        {\pgfpoint{1cm}{0cm}}

     \pgfpathmoveto{\pgfqpoint{-0.75cm}{-0.15cm}}
     \pgfpathcurveto 
            {\pgfpoint{-0.25cm}{.25cm}}
        {\pgfpoint{.25cm}{.25cm}}
        {\pgfpoint{0.75cm}{-0.15cm}}
              \pgfusepath{stroke}
        \fi
        \endgroup
    }
    }

\newcommand{\R}{\mathbb{R}}
\newcommand{\N}{\mathbb{N}}

\DeclareMathOperator{\im}{im}
\renewcommand{\tilde}{\widetilde}

\newtheorem{thm}{Theorem}[section]
\newtheorem{prop}[thm]{Proposition}
\newtheorem{lemma}[thm]{Lemma}

\theoremstyle{definition}
\newtheorem{definition}[thm]{Definition}

\newtheorem{example}[thm]{Example}

\numberwithin{equation}{section}

\begin{document}

\title [Graded Contact Geometry and AKSZ]{Graded Contact Geometry and the AKSZ Formalism}
\author [I. Contreras]{Ivan Contreras}
\address{Department of Mathematics\\
Amherst College\\
31 Quadrangle Drive\\
Amherst, MA 01002, USA}
\email{icontreraspalacios@amherst.edu}
\author [N. Martinez Alba]{Nicolas Martinez Alba}
\address{Universidad Nacional de Colombia\\
Sede Bogot\'a-- Facultad de Ciencias\\
Departamento de Matem\'aticas\\
Carrera 45 No 26-85, Bogot\'a, 111321,  Colombia}
\email{nmartineza@unal.edu.co}

\author [R. Mehta]{Rajan Amit Mehta}
\address{Department of Mathematical Sciences\\
Smith College\\
44 College Lane\\
Northampton, MA 01063, USA}
\email{rmehta@smith.edu}

\subjclass[2020]{
18B40, 
58A50,  
53D10, 
57R56,
53D17	
} 
\keywords{AKSZ Theories, Jacobi Sigma Model, Contact Geometry, Graded Geometry}

\begin{abstract}

The AKSZ formalism is a construction of topological field theories where the target spaces are differential graded symplectic manifolds. In this paper, we describe an analogue of the AKSZ formalism where the target spaces are differential graded contact manifolds. We show that the space of fields inherits a weak contact structure, and we construct a solution to the analogue of the classical master equation, defined via the Jacobi bracket. In the $n=1$ case, we recover the Jacobi sigma model, and in the $n=2$ case, we obtain three-dimensional topological field theories associated to Courant-Jacobi algebroids.

\end{abstract}

\maketitle

\section{Introduction}
Topological Field Theories have been studied extensively over the past decades. In particular, the formulation using supergeometry, known as the AKSZ formalism \cite{AKSZ} has been fundamental in the study of field theories with boundary \cite{cattaneo2018perturbative}, deformation quantization \cite{kontsevich2003deformation}, the Poisson sigma model \cite{PSM2}, among others.

The AKSZ method constructs solutions of the BV
master equation directly, without the need of starting from a classical action with a set of symmetries, in comparison to the BV formalism. The classical action
can be recovered by setting the nonzero degree fields  to
zero. In the AKSZ formalism, the target space is assumed to be a differential graded symplectic manifold. In low degrees, the construction leads to the Poisson sigma model and the Courant sigma model, a special case of which is Chern-Simons theory.

In this manuscript, we describe an analogue of the AKSZ formalism where the target space is a differential graded contact manifold. We prove results that are analogous to those in ordinary AKSZ theory, namely that the space of fields inherits a graded contact structure, and that there is a naturally-defined function that satisfies an analogue of the classical master equation, defined via Jacobi brackets.

As in ordinary AKSZ theory, the classical action can be recovered by setting the nonzero degree fields to zero. In low degrees, this construction leads to the Jacobi sigma model \cite{bascone2021jacobi} and a (to our knowledge, new) three-dimensional theory associated to a Courant-Jacobi algebroid.

We note that the existing literature on the Jacobi sigma model (e.g.\ \cites{bascone2021jacobi, bpv:jacobitwisted, cosmo2025jacobi}) either relies on Poissonization, extracting the theory from the Poisson sigma model, or directly introduces the action, justifying it by observing that it has the desired equations of motion. The AKSZ-contact theory provides a more conceptual and direct explanation of where the Jacobi sigma model comes from.

The structure of the manuscript is as follows. In Section \ref{sec:ct}, we provide an overview on contact geometry. In Section \ref{sec:graded}, we review graded contact structures. In Section \ref{sec:akszcontact}, we describe the AKSZ-contact formalism and prove our main results (Theorem \ref{thm:transgerssion2} and Theorem \ref{thm:ct-CME}). In Section \ref{sec:jacobi}, we obtain the Jacobi Sigma Model as the $n=1$ case of our construction. Finally, in Section \ref{sec:contactcourant}, we consider the $n=2$ case, leading to the Courant-Jacobi sigma model.

In work in progress \cite{JSM-reduction}, we use the AKSZ formulation of the Jacobi Sigma Model to provide a path space construction (via phase space reduction) of contact groupoids integrating Jacobi structures, including their shifted version \cite{maglio2024shifted}.

\subsection*{Acknowledgments} I.C. thanks Universidad Nacional de Colombia and N.M.A. thanks Amherst College for the hospitality. 

\section{Overview of contact geometry}\label{sec:ct}

In this section, we give a brief overview of contact geometry, covering notation and basic results that will be used later in the paper. For more details, see \cites{Geiges, mcduff_salamon}.

\subsection{Contact structures and contact forms}

Let $M$ be a finite-dimensional manifold, and let $\D \subseteq TM$ be a corank $1$ distribution on $M$. Locally, $\D$ can be expressed as the kernel of a $1$-form $\alpha$. If such a $1$-form is chosen, then the $2$-form $d\alpha$ induces a map $(d\alpha)^\flat : TM \to T^*M$, $X \mapsto i_X d\alpha$.

\begin{definition}
    The distribution $\D$ is called a \emph{contact structure} if the map
    \[ \D \hookrightarrow TM \xrightarrow{(d\alpha)^\flat} T^*M \twoheadrightarrow \D^*\]
    is an isomorphism. In this case, the $1$-form $\alpha$ is called a \emph{contact form}.
\end{definition}

One can check that the property of being a contact structure is independent of the choice of $1$-form $\alpha$. Contact forms can be characterized in several different ways.

\begin{prop}[\cite{Geiges}] \label{prop:contact}

Let $M$ be a finite-dimensional manifold, and let $\alpha$ be a (possibly only locally defined) nonvanishing $1$-form. Then the following are equivalent:
\begin{enumerate}
\item $\alpha$ is a contact form.
\item The vector bundle morphism $\flat:TM\to T^*M$ defined by 
\[\flat(X)=(i_X\alpha)\alpha+i_Xd\alpha\]
is an isomorphism.
\item The map 
\begin{align*}
    \mathfrak{X}(M) &\to \Gamma(\im((d\alpha)^\flat))\oplus C^\infty(M) \\
    X&\mapsto (i_Xd\alpha,i_X\alpha)
\end{align*} 
is an isomorphism of $C^\infty(M)$-modules.
\end{enumerate}
\end{prop}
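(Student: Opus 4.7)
The proof is pointwise; fix $p \in M$ and abbreviate $V = T_pM$, $\eta = \alpha_p$, $\omega = d\alpha|_p$, $K = \ker \eta$. The cornerstone of the argument is the identification
\[
\ker \flat \;=\; K \cap \ker(\omega^\flat),
\]
obtained by contracting the equation $\flat(X) = \eta(X)\eta + i_X \omega = 0$ with $X$ itself: this forces $\eta(X)^2 = 0$, hence $\eta(X) = 0$ and so $X \in K$, after which $i_X \omega = 0$. The identical contraction shows that the kernel of the map $\Phi(X) := (i_X d\alpha, i_X \alpha)$ of (3) is also $K \cap \ker(\omega^\flat)$, while the kernel of the composition in (1) unwinds to $\{X \in K : i_X\omega \in \R\eta\}$, which manifestly contains $K \cap \ker(\omega^\flat)$.

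From these kernel identifications, $(1) \Rightarrow (2)$ is immediate: $\ker\flat$ sits inside the (trivial) kernel of the map in (1), so $\flat$ is injective, hence an isomorphism of equidimensional vector bundles. For $(2) \Rightarrow (1)$, I construct a Reeb-type vector $R := \flat^{-1}(\eta)$, which exists by (2). The equation $\flat(R) = \eta$ rearranges to $i_R \omega = (1 - \eta(R))\eta$; contracting with $R$ gives $\eta(R)(\eta(R) - 1) = 0$, so $\eta(R) \in \{0,1\}$. In the case $\eta(R) = 1$ we have $i_R\omega = 0$, and then for any $X \in K$ with $i_X\omega = c\eta$, computing $c = c\eta(R) = \omega(X, R) = 0$ forces $X \in K \cap \ker\omega = 0$, establishing (1). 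The alternative $\eta(R) = 0$ is excluded in the contact-geometric setting by a parity argument: in odd dimension $\omega$ is necessarily degenerate, so combined with $K \cap \ker\omega = 0$ from (2), $\ker\omega$ must be one-dimensional and transverse to $K$, and any $S \in \ker\omega$ with $\eta(S) = 1$ satisfies $\flat(S) = \eta = \flat(R)$, forcing $S = R$ and contradicting $\eta(R) = 0$.

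For $(2) \Leftrightarrow (3)$, I invoke the direct-sum decomposition $T^*M = \im((d\alpha)^\flat) \oplus \R \alpha$, valid once the contact condition is in force: if one had $\alpha = i_X d\alpha$, evaluating on a Reeb vector would yield $1 = 0$. Under this decomposition the map $\flat(X) = (i_X \alpha)\alpha + i_X d\alpha$ is literally the data of $\Phi(X) = (i_X d\alpha, i_X \alpha)$, so $\flat$ is an isomorphism precisely when $\Phi$ is. The principal obstacle in the entire argument is the $(2) \Rightarrow (1)$ step: the kernel identification alone only rules out the $c = 0$ case, and disposing of $c \ne 0$ requires producing the Reeb candidate and invoking the parity of $\dim V$. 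Everything else reduces to the single identification $\ker\flat = K \cap \ker(\omega^\flat)$ together with routine dimension counting.
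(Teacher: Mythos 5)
The paper itself gives no proof of this statement (it is quoted from \cite{Geiges}), so the only issue is whether your argument stands on its own. Your $(1)\Leftrightarrow(2)$ is correct: the identification $\ker\flat=\ker\alpha\cap\ker((d\alpha)^\flat)$, the candidate $R=\flat^{-1}(\alpha)$, and the exclusion of the case $\alpha(R)=0$ by parity all work. You are also right that parity must be invoked somewhere: on $\R^2$ the form $\alpha=dx+x\,dy$ satisfies (2) at every point but is not contact, so the odd-dimensionality implicit in the contact setting of \cite{Geiges} is genuinely needed for $(2)\Rightarrow(1)$, and your proof correctly makes that dependence explicit.

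The gap is in ``$(2)\Leftrightarrow(3)$'': as written you only obtain $(2)\Rightarrow(3)$. The splitting $T^*M=\im((d\alpha)^\flat)\oplus\R\alpha$ is justified in your argument only \emph{after} the contact condition is known --- you use the Reeb vector to see $\alpha\notin\im((d\alpha)^\flat)$ --- so invoking that splitting to pass from (3) back to (2) is circular, and in the end no implication with (3) as the hypothesis is ever established; the three-way equivalence is left open. The repair is in the same spirit as the rest of your argument. Assume (3) pointwise: injectivity gives $\ker\alpha\cap\ker((d\alpha)^\flat)=0$, and surjectivity onto $\im((d\alpha)^\flat)\oplus\R$ forces $\operatorname{rank}(d\alpha)=\dim M-1$ (so in particular $\dim M$ is odd, with no extra hypothesis), hence $\ker((d\alpha)^\flat)$ is a line transverse to $\ker\alpha$; choose $S$ in it with $\alpha(S)=1$. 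Evaluating on $S$ shows $\alpha\notin\im((d\alpha)^\flat)$, which yields your splitting and hence (2); alternatively, for $X\in\ker\alpha$ with $i_Xd\alpha=c\alpha$, evaluating both sides on $S$ gives $c=c\,\alpha(S)=(d\alpha)(X,S)=0$, hence $X=0$ and (1) holds directly. Either version closes the cycle.
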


If a contact form for a contact structure $\D$ exists globally, then $\D$ is said to be \emph{coorientable}. Throughout this paper, we will assume coorientability unless stated otherwise. If $\alpha$ is a contact form and $f$ is any nonvanishing smooth function, then $\alpha' = f\alpha$ is another contact form for the same contact structure. It should be stressed that the isomorphisms in parts (2) and (3) of Proposition \ref{prop:contact} depend on the choice of contact form.

\subsection{Contact vector fields}

Let $M$ be a manifold with contact structure $\D$. A vector field $X \in \mathfrak{X}(M)$ is called a \emph{contact vector field} if it preserves the contact structure, i.e.\ if $[X,Y] \in \Gamma(\D)$ for all $Y \in \Gamma(\D)$. If $\alpha$ is a contact form, then we can equivalently characterize contact vector fields by the property that
\[ L_X \alpha = g \alpha\] 
for some $g \in C^\infty(M)$.

An important example of a contact vector field is the \emph{Reeb vector field} $R$, which is defined by the equation $\flat(R)=\alpha$. Equivalently, $R$ is the unique solution to the equations
\begin{align*} 
i_R \alpha &= 1, & i_R d\alpha &= 0. \end{align*}
We emphasize that the Reeb vector field depends on the choice of contact form $\alpha$.

The following lemma provides a bijection between smooth functions and contact vector fields.

\begin{lemma}[Lemma~3.49 in \cite{mcduff_salamon}]\label{lemma:contactvf}
Let $\alpha$ be a contact form with Reeb vector field $R$. Then $X\in \mathfrak{X}(M)$ is a contact vector field if and only if there exists $f \in C^\infty(M)$ such that\footnote{This statement differs from the reference by a sign, but we adopt it to agree with the sign in the graded version.} 
\begin{align}\label{eq:Hamv.f}
i_X\alpha&=f, & i_Xd\alpha&=R(f)\alpha - df.
\end{align}
Conversely, for every $f \in C^\infty(M)$, the vector field $X_f$ given by \[\flat(X_f) = (f + R(f))\alpha - df\] 
is a contact vector field that satisfies \eqref{eq:Hamv.f}.
\end{lemma}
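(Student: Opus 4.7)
The plan is to deduce everything from Cartan's magic formula $L_X\alpha = d(i_X\alpha) + i_X d\alpha$ together with the defining relations $i_R\alpha = 1$ and $i_R d\alpha = 0$ for the Reeb vector field.

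For the forward direction, assume $X$ is contact, so that $L_X\alpha = g\alpha$ for some $g \in C^\infty(M)$. Setting $f := i_X\alpha$, Cartan's formula immediately yields
\[ i_X d\alpha = g\alpha - df.\]
To identify $g$, I would contract this identity with $R$. Using $i_R\alpha = 1$, $i_R df = R(f)$, and $i_R i_X d\alpha = -i_X i_R d\alpha = 0$, the left-hand side vanishes and the right-hand side becomes $g - R(f)$, so $g = R(f)$ as claimed. The converse direction is even quicker: if $i_X\alpha = f$ and $i_X d\alpha = R(f)\alpha - df$, then Cartan's formula gives $L_X\alpha = df + R(f)\alpha - df = R(f)\alpha$, which is a multiple of $\alpha$, so $X$ is contact.

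For the second part of the lemma, the existence of a unique $X_f$ with $\flat(X_f) = (f + R(f))\alpha - df$ is guaranteed by part (2) of Proposition \ref{prop:contact}, since $\flat$ is an isomorphism. To verify that this $X_f$ actually satisfies \eqref{eq:Hamv.f}, I would expand the defining relation using the formula $\flat(X_f) = (i_{X_f}\alpha)\alpha + i_{X_f}d\alpha$, obtaining
\[ (i_{X_f}\alpha)\alpha + i_{X_f}d\alpha = (f + R(f))\alpha - df.\]
Contracting both sides with $R$ and using $i_R\alpha = 1$, $i_R d\alpha = 0$, and $i_R df = R(f)$ eliminates the $i_{X_f} d\alpha$ term on the left and the $R(f)\alpha$ contribution on the right, yielding $i_{X_f}\alpha = f$. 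Substituting this back recovers $i_{X_f}d\alpha = R(f)\alpha - df$, completing the verification.

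I do not anticipate any serious obstacle; this is a routine exercise in Cartan calculus once one recognizes how the Reeb identities trivialize the bookkeeping. The only subtlety to flag is the sign discrepancy with \cite{mcduff_salamon} noted in the footnote, which is a consistent convention choice rather than a computational issue, and it propagates harmlessly through each calculation above.
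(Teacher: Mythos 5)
Your proof is correct. The paper itself does not prove this lemma---it simply cites Lemma~3.49 of McDuff--Salamon, with the footnoted sign change---and your argument is the standard Cartan-calculus verification: you correctly use $i_R\alpha=1$, $i_Rd\alpha=0$ to identify $g=R(f)$ in the forward direction, the converse is the one-line computation $L_X\alpha=R(f)\alpha$, and contracting $\flat(X_f)=(i_{X_f}\alpha)\alpha+i_{X_f}d\alpha=(f+R(f))\alpha-df$ with $R$ (where $i_Ri_{X_f}d\alpha=0$) indeed yields $i_{X_f}\alpha=f$ and hence \eqref{eq:Hamv.f}, so nothing is missing.
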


The vector field $X_f$ is called the \emph{contact Hamiltonian vector field} of $f$. From \eqref{eq:Hamv.f}, we have
\begin{equation}\label{eq:infsym-ham}
L_{X_f}\alpha= R(f)\alpha.
\end{equation}

\subsection{The Jacobi and Cartan brackets}
Let $M$ be a manifold with a contact form $\alpha$. The space of contact vector fields is closed under the Lie bracket, so the correspondence of Lemma \ref{lemma:contactvf} induces a Lie bracket $\{\cdot,\cdot\}_J$ on $C^\infty(M)$, given by
\[ X_{\{f,g\}_J} = [X_f, X_g].\]
This bracket is called the \emph{Jacobi bracket}\footnote{There are also places in the literature where it has been called the \emph{Lagrange bracket}.}. Using \eqref{eq:Hamv.f} and \eqref{eq:infsym-ham}, we can obtain the more explicit formula
\begin{equation}
\label{eq:ct-brk}
\{f,g\}_J = i_{[X_f,X_g]}\alpha = L_{X_f} i_{X_g} \alpha - i_{X_g} L_{X_f} \alpha = X_f(g) - R(f) g.
\end{equation}

The Jacobi bracket can be viewed as a contact-geometric analogue of the Poisson bracket in symplectic geometry. However, unlike the Poisson bracket, the Jacobi bracket fails to be a derivation, instead satisfying
\[ \{f,gh\}_J = \{f,g\}_J h + g\{f,g\}_J + R(f) gh.\]
Alternatively, we could consider the \emph{Cartan bracket} $\{\cdot,\cdot\}_C$, given by
\begin{equation}\label{eq:ct-brk-2}
    \{f,g\}_C=X_f(g) = \{f,g\}_J + R(f)g,
\end{equation}
which is a derivation in the second entry, but does not satisfy skew-symmetry or the Jacobi identity. On the subspace 
\[C^\infty_R(M)=\{f\in C^\infty(M):R(f)=0\},\]
of functions that are invariant under $R$, the Jacobi and Cartan brackets coincide, giving $C^\infty_R(M)$ the structure of a Poisson algebra.

\subsection{Symplectization}
\emph{Symplectization} is a process that produces a symplectic manifold from a contact manifold (with a fixed contact form). Given a manifold $M$ with contact form $\alpha$, set $\tilde{M} = M \times \R$, and let 
\[\omega = d(e^t \alpha) = e^t dt \wedge \alpha + e^t d\alpha,\]
where $t$ is the coordinate on $\R$. Then $\tilde{M}$ is a symplectic manifold equipped with a distinguished vector field $Z = \frac{\partial}{\partial t}$. The symplectic form is \emph{homogeneous} (with respect to $Z$), in the sense that it satisfies the equation $L_Z \omega = \omega$. 

On the other hand, given a homogeneous symplectic form $\omega$ on $\tilde{M}$, one can obtain a contact form on $M$ via the equation $\alpha = i_Z \omega |_{t=0}$. The following statement summarizes this correspondence.

\begin{prop}\label{prp:symplectization}
There is a one-to-one correspondence between contact forms on $M$ and homogeneous symplectic forms on $\tilde{M}$.
\end{prop}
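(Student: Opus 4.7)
The plan is to verify that both directions of the correspondence are well-defined and mutually inverse. Assuming $\dim M = 2n+1$, which is necessary for a contact structure to exist, I will denote $\beta = e^t \alpha$ in the forward construction and $\beta = i_Z \omega$ in the backward construction; in either case the correspondence is encoded by $\omega = d\beta$.

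For the forward direction, given a contact form $\alpha$ on $M$, I set $\omega = d(e^t \alpha) = e^t\, dt \wedge \alpha + e^t\, d\alpha$. Exactness makes closedness immediate. Non-degeneracy follows from the computation $\omega^{n+1} = (n+1)\, e^{(n+1)t}\, dt \wedge \alpha \wedge (d\alpha)^n$, which is a volume form on $\tilde{M}$ precisely because $\alpha \wedge (d\alpha)^n$ is a volume form on $M$. Homogeneity follows from $L_Z(e^t \alpha) = e^t \alpha$, since $\alpha$ is pulled back from $M$ and $L_Z e^t = e^t$, together with the commutativity $L_Z d = d L_Z$, giving $L_Z \omega = d\, L_Z(e^t \alpha) = \omega$.

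For the backward direction, let $\omega$ be a homogeneous symplectic form on $\tilde{M}$. Cartan's magic formula together with $d\omega = 0$ yields $\omega = L_Z \omega = d\, i_Z \omega$, so setting $\beta := i_Z \omega$ gives $\omega = d\beta$, with $i_Z \beta = 0$ and $L_Z \beta = i_Z L_Z \omega = \beta$. In local coordinates $(x^i, t)$, the condition $i_Z \beta = 0$ forces $\beta$ to have no $dt$-component, so $\beta = \sum_i f_i(x,t)\, dx^i$; the homogeneity $L_Z \beta = \beta$ then forces $\partial_t f_i = f_i$, i.e., $f_i = e^t g_i(x)$. Hence $\beta = e^t \alpha$ for a unique 1-form $\alpha = i_Z \omega|_{t=0}$ on $M$ (identified with $M \times \{0\}$). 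The contact condition on $\alpha$ is obtained by reversing the non-degeneracy computation: non-vanishing of $\omega^{n+1}$ on $\tilde{M}$ forces $\alpha \wedge (d\alpha)^n$ to be nowhere vanishing on $M$.

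To conclude, I check that the two constructions are mutually inverse. Starting from $\alpha$ and computing $i_Z\, d(e^t \alpha)|_{t=0} = e^t \alpha|_{t=0} = \alpha$ recovers $\alpha$; starting from $\omega$, the argument above produces $\alpha$ with $\omega = d(e^t \alpha)$ by construction. The main technical step is the local extraction of $\alpha$ from $\beta$, which reduces to the ODE $\partial_t f = f$ on each coefficient; the resulting 1-form glues consistently because the decomposition is coordinate-independent (it depends only on $Z$ and the slice $M \times \{0\}$). Everything else is a direct application of Cartan's magic formula and the standard characterization of contact forms via the nonvanishing of $\alpha \wedge (d\alpha)^n$.
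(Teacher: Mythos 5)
Your proof is correct and follows the same route the paper sketches around the statement (the paper gives no separate proof, only the two constructions $\omega = d(e^t\alpha)$ and $\alpha = i_Z\omega|_{t=0}$, which you verify in detail via Cartan's formula, the homogeneity ODE $\partial_t f = f$, and the volume-form nondegeneracy check). The only implicit step is the equivalence of the paper's isomorphism definition of a contact form with the nonvanishing of $\alpha\wedge(d\alpha)^n$, which is standard in finite dimensions and covered by the cited references.
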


Many structures on a contact manifold can be lifted to its symplectization. Some such lifts and their properties are described in the following statement; see \cite{mcduff1991symplectic} for more details. 

\begin{prop}\label{prop:prop-symp}
Let $M$ be a manifold with contact form $\alpha$, and let $(\tilde{M}, \omega)$ be its symplectization. 
\begin{enumerate}

\item For any $f \in C^\infty(M)$, the vector field
\[\widetilde{X_f}=X_f-R(f)\frac{\partial}{\partial t}\]
on $\tilde{M}$ is the Hamiltonian vector field associated to the function $\tilde{f} = e^t f$.
\item The map $C^\infty(M) \to C^\infty(\tilde{M})$, $f \mapsto \tilde{f} = e^t f$, satisfies 
\[\widetilde{\{f,g\}}_J = \{\tilde{f},\tilde{g}\}_\omega,\] 
where $\{\cdot, \cdot\}_\omega$ is the Poisson bracket associated to $\omega$.
\end{enumerate}
\end{prop}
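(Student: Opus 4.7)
Both parts are direct computations given the structure theorems already established. The key is to unfold the definition of the symplectization's form $\omega = e^t\, dt \wedge \alpha + e^t\, d\alpha$ against $\widetilde{X_f} = X_f - R(f)\partial_t$, using the characterization of $X_f$ provided by Lemma~\ref{lemma:contactvf}.

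For part (1), the strategy is to verify the Hamiltonian equation $i_{\widetilde{X_f}}\omega = -d\tilde f$ term by term. Since $X_f$ lifts to $\tilde M$ as a vector field tangent to the $M$ factor, it annihilates $dt$, giving $i_{X_f}(e^t\, dt \wedge \alpha) = -e^t f\, dt$ via $i_{X_f}\alpha = f$. Using Lemma~\ref{lemma:contactvf}, $i_{X_f}(e^t\, d\alpha) = e^t(R(f)\alpha - df)$. For the vertical piece, $i_{\partial_t}(e^t\, dt \wedge \alpha) = e^t\alpha$ and $i_{\partial_t}(e^t\, d\alpha) = 0$, so $i_{-R(f)\partial_t}\omega = -e^t R(f)\alpha$. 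Summing these, the $R(f)\alpha$ terms cancel and what remains is $-e^t f\, dt - e^t\, df = -d(e^t f) = -d\tilde f$, as required.

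For part (2), the plan is to exploit part (1) together with the explicit formula \eqref{eq:ct-brk} for the Jacobi bracket. Using the convention $\{F,G\}_\omega = X_F(G)$ consistent with part (1), compute
\begin{align*}
\{\tilde f, \tilde g\}_\omega &= \widetilde{X_f}(\tilde g) = \bigl(X_f - R(f)\partial_t\bigr)(e^t g) \\
&= e^t X_f(g) - e^t R(f) g = e^t\bigl(X_f(g) - R(f) g\bigr).
\end{align*}
By \eqref{eq:ct-brk}, the parenthesized expression is precisely $\{f,g\}_J$, so $\{\tilde f,\tilde g\}_\omega = e^t\{f,g\}_J = \widetilde{\{f,g\}_J}$.

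\textbf{Anticipated difficulty.} There is no real obstacle, only bookkeeping: one must remain careful about (i) the sign convention relating Hamiltonian vector fields to the symplectic form (chosen here so that part (1) is consistent with the formula $\{F,G\}_\omega = X_F(G)$ used in part (2), matching the sign convention adopted in Lemma~\ref{lemma:contactvf}), and (ii) the implicit pullback of $\alpha$, $d\alpha$, $R$, and $X_f$ from $M$ to $\tilde M = M \times \mathbb R$. Once these are fixed, everything is a two-line calculation.
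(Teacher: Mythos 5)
Your computation is correct and matches the paper's approach: the paper states the ungraded proposition with a reference and proves the graded analogue (Proposition \ref{prop:prop-sympgraded}) by exactly this direct calculation, checking $i_{\widetilde{X_f}}\omega = -d\tilde f$ and then computing $\{\tilde f,\tilde g\}_\omega = \widetilde{X_f}(\tilde g) = e^t\{f,g\}_J$ via \eqref{eq:ct-brk}. Your term-by-term contraction of $\omega$ and your explicit handling of the sign convention are just a more detailed unpacking of the same argument.
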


\subsection{Weak contact manifolds}

We will sometimes need to consider contact structures on (e.g.\ Banach or Fr\'echet) infinite-dimensional manifolds. In these cases, we define a \emph{weak contact structure} to be a corank $1$ distribution $\D$ such that the map in Definition \ref{def:isom. ctct} is injective.  This definition parallels the notion of weak symplectic structures (see e.g.\cite{chernoff2006properties}).

One needs to be careful when working with weak contact structures, as the Reeb vector field and, more generally, contact vector fields $X_f$, may not exist. But when they do exist, they are unique.

\section{Graded contact geometry}\label{sec:graded}

This section contains a brief review of graded manifolds and graded contact structures. For more details on graded manifolds, see \cites{cattaneo-schaetz, fairon, mehta:thesis}. Graded contact structures are described in \cites{mehta2013, grabowski:gradedcontact}, and the closely related theory of graded symplectic structures is in \cite{roytenberg:symplectic}.

\subsection{Graded manifolds}
A \emph{graded manifold} $\mathcal M$ is a manifold $M$ equipped with a sheaf $\mathcal{O}_\mathcal{M}$ of graded-commutative $\Z$-graded algebras such that, locally,
\begin{equation}\label{eqn:localgraded} 
\mathcal{O}_\mathcal{M}(U) \cong C^\infty(U) \otimes S(W),
\end{equation}
where $W$ is a $\Z$-graded vector space. It is usually assumed that $W$ does not have a degree $0$ component. 

We think of $\mathcal{O}_\mathcal{M}$ as being the sheaf of ``smooth functions'' on $\mathcal{M}$, and we will use $C(\mathcal{M})$ to denote the global sections of $\mathcal{O}_\mathcal{M}$. The local form \eqref{eqn:localgraded} implies that there exist graded local coordinates, where the degree $0$ coordinates are coordinates on $M$, the functions are smooth in the degree $0$ coordinates and polynomial in the nonzero degree coordinates, and odd degree functions anti--commute with each other. We will use the notation $|f|$ to denote the degree of a homogeneous function.

We will often assume that the sheaf $\mathcal{O}_\mathcal{M}$ is nonnegatively graded. In this case, $\mathcal{M}$ is said to be an \emph{$\N$-manifold}.

The following basic example will often appear in this paper.
\begin{example}\label{ex:OddTangent} \label{ex:oddCT} (Shifted vector bundles). If $E \to M$ is a vector bundle, then the $\N$-manifold $E[1]$ is given by $\mathcal{O}_{E[1]} = \wedge \Gamma(E^*)$. In particular, the algebra of functions on $T[1]M$ is the algebra $\Omega(M)$ of differential forms on $M$, and the algebra of functions on $T^*[1]M$ is the algebra $\mathfrak{X}^\bullet(M) = \wedge \Gamma(TM)$ of multivector fields on $M$.
\end{example}

Many differential geometric constructions on manifolds can be defined more generally on graded manifolds. In particular, a (degree $k$) vector field on $\mathcal{M}$ is defined as a (degree $k$) graded derivation of $\mathcal{O}_\mathcal{M}$. The space $\mathfrak{X}(\mathcal{M})$ of vector fields on $\mathcal{M}$, endowed with the graded commutator bracket
\[ [X,Y] = XY - (-1)^{|X||Y|} YX,\]
is a graded Lie algebra.

\begin{example}[Euler vector field]
On any graded manifold, there is a natural degree $0$ vector field called the \emph{Euler vector field} $\E$, given by $\E(f) = |f| f$ for any homogeneous function $f$.
\end{example}

\begin{definition}
A \emph{homological vector field} is a degree $1$ vector field $Q$ such that $[Q,Q]=2Q^2 = 0$. A graded manifold ($\N$-manifold) $\mathcal{M}$ equipped with a homological vector field is called a \emph{$Q$-manifold} ($\N Q$-manifold)\footnote{The term \emph{differential graded manifold}, or \emph{dg-manifold} for short, is also often used for a graded manifold equipped with a homological vector field.}.
\end{definition}

\begin{example}\label{ex:tone}
Let $M$ be a manifold. In Example \ref{ex:OddTangent}), we saw that the algebra of functions on $T[1]M$ is $\Omega(M)$. The de Rham operator $d$ is a degree $1$ derivation satisfying $d^2=0$, so it can be viewed as a homological vector field $Q_d$ on $T[1]M$. The use of the notation $Q_d$ here is intended to avoid confusion with the de Rham operator on $\Omega(T[1]M)$.
\end{example}
\begin{example}
As a generalization of Example \ref{ex:tone}, suppose that $A \to M$ is a Lie algebroid. The Lie algebroid differential $d_A$ is a degree $1$ derivation of the algebra $\wedge \Gamma(A^*)$ of ``Lie algebroid forms'' and satisfies $d_A^2 = 0$, so it can be viewed as a homological vector field on $A[1]$. This observation is due to Va\u{\i}ntrob \cite{vaintrob}.
\end{example}

Differential forms can also be defined on a graded manifold. We refer the reader to \cite{cattaneo-schaetz} for a brief summary and to \cite{mehta:thesis} for a more extended treatment. The algebra of differential forms is bigraded, in the sense that it decomposes into a sum of spaces $\Omega^p_\ell(\mathcal{M})$ of $p$-forms of degree $\ell$. The commutation rules arise from the parity of the total degree $p+\ell$.

On $\Omega(\mathcal{M})$, one can define the Cartan calculus, which consists of the de Rham differential $d$ and, for each vector field $X \in \mathfrak{X}(\mathcal{M})$, a contraction operator $i_X$ and a Lie derivative operator $L_X$, satisfying the graded commutation rules
\begin{align*}
    [d,d] &= 2d^2 = 0, \\
    [L_X, d] &= L_X d - (-1)^{|X|} d L_X = 0, \\
    [i_X, d] &= i_X d + (-1)^{|X|} di_X = L_X, \\
    [L_X, L_Y] &= L_X L_Y - (-1)^{|X||Y|} L_Y L_X = L_{[X,Y]},\\
    [L_X, i_Y] &= L_X i_Y - (-1)^{|X|(|Y|-1)} i_Y L_X = i_{[X,Y]},\\
    [i_X,i_Y] &= i_X i_Y - (-1)^{(|X|-1)(|Y|-1)} i_Y i_X = 0.
\end{align*}

\subsection{Contact structures on graded manifolds}\label{sec:ctctgraded}

Much of the theory of contact manifolds extends to the graded setting. We give a brief summary here and refer the reader to \cite{mehta2013} for more details. We also mention \cite{grabowski:gradedcontact}, where the theory is developed completely in terms of the distribution, allowing for non-coorientable contact structures.

Let $\mathcal{M}$ be an $\NN$-manifold, let $\alpha \in \Omega^1_n(\mathcal{M})$ be a nowhere-vanishing $1$-form of degree $n$, and let $\D = \ker \alpha \subset T\mathcal{M}$.

\begin{definition}
The $1$-form $\alpha$ is a \emph{contact form} if the map
 \begin{equation}\label{def:isom. ctct}
    \D \hookrightarrow T\calM \xrightarrow{(d\alpha)^\flat} T^*\calM \twoheadrightarrow \D^*
 \end{equation}
 is an isomorphism.
\end{definition}

In principle, one should define a contact $\N$-manifold in terms of the distribution $\D$, as in \cite{grabowski:gradedcontact}, but in this paper we will use the term (degree $n$) \emph{contact $\N$-manifold} to mean an $\N$-manifold equipped with a contact form (of degree $n$).

The statement of Proposition \ref{prop:contact} extends directly to the graded setting (see Lemma~2.1 in \cite{mehta2013}), with the minor caveat that the isomorphism $\flat: T\mathcal{M} \to T^*\mathcal{M}$, $X \mapsto (i_X \alpha)\alpha + i_X d\alpha$, is nonhomogeneous.

As in the ungraded case, the Reeb vector field $R$ is uniquely defined by $\flat(R) = \alpha$, or equivalently, by $i_R \alpha = 1$, $i_R d\alpha =0$. We note that $|R| = -n$.

The statement of Lemma \ref{lemma:contactvf} also extends to the graded setting; see Proposition~2.4 in \cite{mehta2013}. Using the sign conventions there, the contact vector field $X_f$ associated to a homogeneous function $f \in C^\infty(\mathcal{M})$ is given by
\begin{align}\label{eqn:grad}
    i_{X_f} \alpha &= f, & i_{X_f}d\alpha = (-1)^{n(|f|-1)}R(f)\alpha - (-1)^{|f|-n} df.
\end{align}
We note that $|X_f| = |f|-n$. From \eqref{eqn:grad}, it follows that
\[ L_{X_f} \alpha = (-1)^{n(|f|-1)} R(f)\alpha.\]
The formulas for the Jacobi and Cartan brackets are then
\begin{align}
\{f,g\}_J &= i_{[X_f,X_g]} \alpha = X_f(g) - (-1)^{n(|f|+1)} R(f) g,\\
\{f,g\}_C &= X_f(g).
\end{align}

\begin{definition}
A \emph{contact $\N Q$-manifold} is a contact $\N$-manifold equipped with a vector field $Q$ that is contact and homological.
\end{definition}

Using the correspondence between functions and contact vector fields, we can equivalently say that a degree $n$ contact $\N Q$-manifold is a degree $n$ contact $\N$-manifold equipped with a degree $n+1$ function $S$ such that $\{S,S\}_J = 0$.

\subsection{The structure of contact $\N$-manifolds}
\label{sec:structure}

The following result from \cite[Theorem~2.9]{mehta2013} provides a correspondence between contact structures and symplectic structures on $\N$-manifolds.

\begin{thm}\label{thm:ctctSchw}
Suppose that $\calM$ is an $\N$-manifold equipped with a degree $n$ contact form with $n>0$. Then there is a canonical isomorphism
\[ \calM \cong \calN \times \R[n],\]
where $\calN$ is a symplectic $\N$-manifold.
\end{thm}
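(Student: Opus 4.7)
The plan is to exploit the Euler vector field $\mathcal{E}$, which is intrinsic to any $\N$-manifold, in order to extract a canonical global degree $n$ coordinate on $\mathcal{M}$ from the contact form. Since $\alpha$ has degree $n$, one has $L_\mathcal{E} \alpha = n\alpha$, so Cartan's magic formula yields the decomposition
\[ \alpha = dt + \beta, \qquad t := \tfrac{1}{n} i_\mathcal{E} \alpha, \quad \beta := \tfrac{1}{n} i_\mathcal{E} d\alpha.\]
Here $t$ has degree $n$, $\beta$ has degree $n$, and $i_\mathcal{E}\beta = 0$ because $i_\mathcal{E}^2 = 0$ (as $|\mathcal{E}|=0$). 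Note also that $d\alpha = d\beta$, so $\beta$ is automatically a potential for the 2-form $d\alpha$.

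The second step is to verify that $R(t) = 1$, i.e.\ that $R$ is dual to $t$ in a Reeb-like sense. Using the general identity $[\mathcal{E}, X] = |X|X$ for homogeneous vector fields $X$, we get $[R,\mathcal{E}] = nR$. Combining this with $L_R \alpha = 0$ (which follows from $i_R\alpha = 1$ being constant and $i_Rd\alpha = 0$) and the commutation rule $[L_R, i_\mathcal{E}] = i_{[R,\mathcal{E}]}$, we obtain
\[ R(t) = \tfrac{1}{n} L_R(i_\mathcal{E}\alpha) = \tfrac{1}{n}\bigl((-1)^n i_\mathcal{E} L_R \alpha + i_{[R,\mathcal{E}]}\alpha\bigr) = \tfrac{1}{n}\bigl(0 + n\, i_R\alpha\bigr) = 1.\]
It also follows that $L_R \beta = L_R(\alpha - dt) = 0 - d(R(t)) = 0$, so $\beta$ is $R$-basic.

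The third step is to produce the splitting. Define $\mathcal{N}$ as the sub-$\N$-manifold cut out by $t = 0$; its algebra of functions is $\ker R \subset C^\infty(\mathcal{M})$. Because $R(t) = 1$, iterated application of $R$ gives a ``Taylor expansion'' in $t$: any homogeneous $f \in C^\infty(\mathcal{M})$ can be written uniquely as $f = \sum_{k} \tfrac{t^k}{k!} f_k$ with $f_k \in C^\infty(\mathcal{N})$. The sum is finite because on an $\N$-manifold the total degree is bounded below (each power $t^k$ contributes degree $kn$, and $f_k$ has nonnegative degree $|f| - kn$). This gives the algebra isomorphism $C^\infty(\mathcal{M}) \cong C^\infty(\mathcal{N}) \otimes \R[t]$, which is exactly $\mathcal{M} \cong \mathcal{N} \times \R[n]$. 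Since $\beta$ is $R$-invariant and contains no $dt$ term, it descends to a well-defined degree $n$ 1-form on $\mathcal{N}$, and we set $\omega := d\beta \in \Omega^2_n(\mathcal{N})$.

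Finally, one checks that $\omega$ is symplectic on $\mathcal{N}$. Closedness is immediate. For nondegeneracy, observe that under the splitting $T\mathcal{M}|_\mathcal{N} = T\mathcal{N} \oplus \langle R\rangle$, the contact distribution $\mathcal{D} = \ker\alpha$ is transverse to $R$ (since $i_R\alpha = 1$), so the projection along $R$ identifies $\mathcal{D}|_\mathcal{N}$ with $T\mathcal{N}$. Under this identification, $d\alpha|_\mathcal{D}$ corresponds to $d\beta|_{T\mathcal{N}} = \omega$, so the contact nondegeneracy of $(d\alpha)^\flat : \mathcal{D} \to \mathcal{D}^*$ is equivalent to nondegeneracy of $\omega^\flat : T\mathcal{N} \to T^*\mathcal{N}$. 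The main technical subtlety lies in the third step, making the ``Taylor expansion in $t$'' rigorous at the level of sheaves; the positivity of the grading on an $\N$-manifold is precisely what ensures termination and hence the clean product decomposition.
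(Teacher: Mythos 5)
The paper does not actually prove this statement: it imports it from \cite[Theorem~2.9]{mehta2013} and only records the resulting relation $\alpha = \frac{1}{n}(i_\epsilon \omega + d\theta)$ with $\theta = i_\epsilon\alpha$. Your argument is essentially the standard proof behind that citation, and it is consistent with the paper's stated description (your $t$ is $\theta/n$ and your $\beta$ is $\frac{1}{n}i_\epsilon\, d\alpha$, the pullback of $\frac{1}{n}i_\epsilon\omega$). Steps one and two are correct, including the sign bookkeeping: $L_\mathcal{E}\alpha = n\alpha$ gives $\alpha = dt + \beta$, and $R(t)=1$, $L_R\beta = 0$ follow as you say; you should also record the one-line check $i_R\beta = \frac{1}{n} i_R i_\mathcal{E} d\alpha = \pm \frac{1}{n} i_\mathcal{E} i_R d\alpha = 0$, since horizontality (not just $R$-invariance) is what lets $\beta$ descend to $\calN$ and makes $d\alpha$ the pullback of $\omega$; with that, your transversality argument for nondegeneracy of $\omega$ is fine.

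Two points in step three need to be supplied. First, when $n$ is odd the ``Taylor expansion'' is $f = f_0 + t f_1$ with $f_1 = R(f)$, and for $f_1$ to lie in $\ker R$ you need $R^2 = 0$, i.e.\ $[R,R]=0$; this is true but not automatic for an odd vector field --- it follows from $i_{[R,R]}\alpha = L_R i_R\alpha \pm i_R L_R\alpha = 0$ and $i_{[R,R]}d\alpha = 0$ together with the injectivity of $X \mapsto (i_X d\alpha, i_X\alpha)$ from the graded version of Proposition \ref{prop:contact}. (In the even case no such issue arises: the operator $P = \sum_k \frac{(-t)^k}{k!}R^k$, which is locally finite because $R$ lowers degree by $n>0$, satisfies $R\circ P = 0$ and produces the coefficients.) Second, you identify $C^\infty(\calN)$ with $\ker R$, but to conclude that $\calN$ is an $\N$-manifold and that $\calM \cong \calN\times\R[n]$ as graded manifolds you must show $\ker R$ is locally of the standard form, e.g.\ by noting that $dt$ is nowhere vanishing (since $i_R\, dt = 1$), so $t$ completes locally to a chart and the expansion identifies $\ker R$ with $C^\infty(\calM)/(t)$, which is locally free. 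You acknowledge this as the main subtlety but do not carry it out; with these two patches the proof is complete and canonical, since every ingredient is built from $\alpha$ alone.
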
 

There is an explicit description of how the contact form $\alpha$ on $\calM$ is related to the symplectic form $\omega$ on $\calN$, given by
\[ \alpha = \frac{1}{n}(i_\epsilon \omega + d\theta),\]
where $\epsilon$ is the Euler vector field and $\theta = i_\epsilon \alpha$ is the coordinate on $\R[n]$, called the \emph{Euler function}.

It should be stressed that the correspondence of Theorem \ref{thm:ctctSchw} relates a contact $\N$-manifold to a \emph{smaller} symplectic $\N$-manifold. This is in contrast to symplectization (see Section \ref{sec:symplectization}), which relates a contact $\N$-manifold to a larger symplectic $\N$-manifold.

The result of Theorem \ref{thm:ctctSchw} allows us to make use of results from graded symplectic geometry in the study of graded contact geometry. In particular, there is the Darboux Theorem \cite{schwarz:bv} and numerous useful facts from \cite{roytenberg:symplectic}, including the classification of symplectic $\N$-manifolds of degrees $1$ and $2$, which will play an important role in Sections \ref{sec:jacobi} and \ref{sec:contactcourant}.
\subsection{Symplectization}\label{sec:symplectization}
Here we describe the graded version of symplectization and the corresponding generalizations of the results in Proposition~\ref{prop:prop-symp}. 

Let $\calM$ be a contact $\N$-manifold with degree $n$ contact form $\alpha$. Set $\tilde{\calM} = \calM \times \R$, and let
\[ \omega = d(e^t \alpha) = e^t(dt \wedge \alpha + d\alpha),\]
where $t$ is the coordinate on $\R$. Then $\tilde{\calM}$ is a degree $n$ symplectic manifold equipped with a distinguished vector field $Z = \frac{\partial}{\partial t}$, satisfying the homogeneity condition $L_Z \omega = \omega$.

The following is the generalization of Proposition \ref{prop:prop-symp} to the graded setting.
\begin{prop}\label{prop:prop-sympgraded}
\begin{enumerate}
\item For any $f \in C^\infty(\calM)$, the vector field
\[\widetilde{X_f}=X_f-(-1)^{n(|f|-1)}R(f)\frac{\partial}{\partial t}\]
on $\tilde{\calM}$ is the Hamiltonian vector field associated to the function $\tilde{f} = e^t f$.
\item The map $C^\infty(\calM) \to C^\infty(\tilde{\calM})$, $f \mapsto \tilde{f} = e^t f$, satisfies 
\[\widetilde{\{f,g\}}_J = \{\tilde{f},\tilde{g}\}_\omega,\] 
where $\{\cdot, \cdot\}_\omega$ is the Poisson bracket associated to $\omega$.
\end{enumerate}
\end{prop}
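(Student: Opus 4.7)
The strategy is to verify both statements by direct computation in the graded Cartan calculus, relying on the defining equations \eqref{eqn:grad} of the contact Hamiltonian vector field $X_f$ together with the expansion $\omega = e^t(dt \wedge \alpha + d\alpha)$.

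For part (1), I would first observe that $\alpha$ and $d\alpha$ involve no $dt$, so $i_{\partial/\partial t}\omega = e^t \alpha$, while $X_f$ is tangent to $\calM$ and thus annihilates $dt$. Expanding $i_{\widetilde{X_f}}\omega$ by the graded Leibniz rule and substituting the identities $i_{X_f}\alpha = f$ and $i_{X_f}d\alpha = (-1)^{n(|f|-1)}R(f)\alpha - (-1)^{|f|-n}df$ from \eqref{eqn:grad}, the two terms proportional to $R(f)\alpha$ cancel: one arising from the contact identity for $X_f$ against $d\alpha$, the other from pairing the $\partial/\partial t$-correction of $\widetilde{X_f}$ against $dt \wedge \alpha$. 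What remains is precisely $e^t f\,dt + e^t df = d(e^t f) = d\tilde{f}$, up to the overall sign fixed by whichever convention $i_{X_H}\omega = \pm dH$ is adopted for the Hamiltonian vector field on the symplectization.

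For part (2), the bracket identity follows from part (1) together with the standard formula $\{\tilde{f},\tilde{g}\}_\omega = \widetilde{X_f}(\tilde{g})$. Applying $\widetilde{X_f} = X_f - (-1)^{n(|f|-1)}R(f)\frac{\partial}{\partial t}$ to $\tilde{g} = e^t g$ and using $X_f(e^t g) = e^t X_f(g)$ (since $|e^t|=0$) together with $\frac{\partial}{\partial t}(e^t g) = e^t g$, I obtain
\[ \{\tilde{f},\tilde{g}\}_\omega = e^t\bigl(X_f(g) - (-1)^{n(|f|-1)} R(f)g\bigr).\]
On the other hand, $\widetilde{\{f,g\}}_J = e^t\{f,g\}_J = e^t\bigl(X_f(g) - (-1)^{n(|f|+1)}R(f)g\bigr)$, and the two expressions agree because $(-1)^{n(|f|+1)} = (-1)^{n(|f|-1)}$.

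The main obstacle I anticipate is the bookkeeping of graded signs, since $\alpha$ has internal degree $n$, $X_f$ has degree $|f|-n$, and each application of the graded Leibniz rule for $i_X$ produces a sign governed by the total bidegrees of its arguments; one must check that these intermediate signs line up with the normalization chosen in \eqref{eqn:grad}. Conceptually, however, the argument is identical to its ungraded counterpart in Proposition \ref{prop:prop-symp}, and the cancellation that selects the correction term $-(-1)^{n(|f|-1)} R(f)\frac{\partial}{\partial t}$ is exactly the one that makes the classical symplectization argument work.
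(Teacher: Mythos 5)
Your proposal is correct and follows essentially the same route as the paper: part (1) by directly contracting $\widetilde{X_f}$ into $\omega = e^t(dt\wedge\alpha + d\alpha)$ and using \eqref{eqn:grad} so the $R(f)\alpha$ terms cancel and $d\tilde f$ remains (the paper records the resulting convention sign explicitly as $d\tilde f = (-1)^{|f|-n-1} i_{\widetilde{X_f}}\omega$), and part (2) by evaluating $\widetilde{X_f}(\tilde g)$ and matching it with $e^t\{f,g\}_J$ via $(-1)^{n(|f|-1)} = (-1)^{n(|f|+1)}$. No substantive difference from the paper's argument.
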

\begin{proof}
By direct calculation, we have
\[ d\tilde{f} = e^t(dt \cdot f + df) = (-1)^{|f|-n-1}i_{\tilde{X_f}}\omega,\]
which proves the first statement. For the second statement, we have
\begin{equation}
    \begin{split}
       \{\tilde{f},\tilde{g}\}_\omega &= X_{\tilde{f}}(\tilde{g}) \\
       &= e^t\left(X_f(g) - (-1)^{n(|f|-1)}R(f)g\right) \\
       &= e^t \{f,g\}_J\\
       &= \widetilde{\{f,g\}}_J.\qedhere
    \end{split}
\end{equation}
\end{proof}

A consequence of Proposition \ref{prop:prop-sympgraded} is that, if $\calM$ is a degree $n$ contact $\N Q$-manifold with homological contact vector field $Q = X_S$, then the symplectization $\tilde{\calM}$ is a degree $n$ symplectic $\N Q$-manifold with Hamiltonian function $\tilde{S} = e^t S$. 

Via a graded version of Proposition \ref{prp:symplectization}, one can characterize contact $\N$-manifolds and contact $\N Q$-manifolds in terms of their symplectizations. This is the approach taken in \cite[Theorem 5.1]{grabowski:gradedcontact}.

\section{The AKSZ-contact formalism}\label{sec:akszcontact}

This section contains the main results of the paper, where we describe a contact-geometric analogue of the AKSZ formalism, which produces an $(n+1)$-dimensional topological field theory associated to any degree $n$ contact $\N$-manifold. We begin with some general background information in Sections \ref{sec:mapping} and \ref{sec:lifting}, and the AKSZ-contact formalism is described in Sections \ref{sec:transgression} and \ref{Poly-CME}. In Section \ref{sec:symplectizationaksz}, we describe a relationship between AKSZ-contact theory and the usual AKSZ theory, via symplectization. In Section \ref{sec:classical} we describe the associated classical theory.

\subsection{Mapping spaces}
\label{sec:mapping}
Let $\calM$ and $\calN$ be graded manifolds. We denote by $\Hom(\calN,\calM)$ the space of maps of graded manifolds from $\calN$ to $\calM$. This space can be endowed with the structure of a (usually infinite-dimensional Fr\'echet) manifold.

The \emph{mapping space} $\Map(\calN,\calM)$ is a graded manifold whose body is $\Hom(\calN,\calM)$, given by the property\footnote{More precisely, the functor $\Map(\calN,-)$ is defined as the right adjoint to the functor $- \times \calN$.} that
\begin{equation}\label{eqn:mapping}
\Hom(\mathcal{Z} \times \calN, \calM) = \Hom(\mathcal{Z},\Map(\calN,\calM))
\end{equation}
for all graded manifolds $\mathcal{Z}$. For more details on mapping spaces, see \cite{mnev2017lectures, roytenberg:aksz}.

A key feature of the mapping space is the existence of a canonical \emph{evaluation map}
\[ \ev: \Map(\calN, \calM) \times \calN \to \calM,\]
defined as the element of $\Hom(\Map(\calN, \calM) \times \calN, \calM)$ that corresponds to the identity map in $\Hom(\Map(\calN, \calM),\Map(\calN, \calM))$ under \eqref{eqn:mapping} with $\mathcal{Z} = \Map(\calN,\calM)$.

\subsection{Lifting of geometric structures}
\label{sec:lifting}
In this section, we summarize different ways that geometric structures can be lifted from $\calM$ or $\calN$ to the mapping space $\Map(\calN,\calM)$. For more details, we refer to \cite{cattaneo-schaetz,mnev2017lectures,roytenberg:aksz}.

The first observation is that the diffeomorphism spaces $\Map(\calN,\calN)$ and $\Map(\calM,\calM)$ are graded Lie groups that act on $\Map(\calN,\calM)$ by pre- and post-composition. By differentiating, one obtains maps of vector fields
\begin{align*}
    \mathfrak{X}(\calN) &\to \mathfrak{X}(\Map(\calN,\calM)),& \mathfrak{X}(\calM) &\to \mathfrak{X}(\Map(\calN,\calM))\\
    X &\mapsto \hat{X}, & Y &\mapsto \breve{Y},
\end{align*}
which are morphisms (or anti-morphisms, depending on sign conventions) of graded Lie algebras. Additionally, because the two actions commute, we have that $[\hat{X},\breve{Y}]=0$ for all $X \in \mathfrak{X}(\calN)$ and $Y \in \mathfrak{X}(\calM)$.

Next, we consider the lifting of differential forms from $\calM$ to the mapping space. Given any differential form $\beta \in \Omega(\calM)$, we can use the evaluation map to obtain $\ev^*\beta \in \Omega(\Map(\calN, \calM) \times \calN)$. In order to project the form down to the mapping space, we assume that $\calN$ is equipped with a Berezinian measure $\mu$. Integration with respect to the measure defines a map $\mu_*: \Omega(\Map(\calN, \calM) \times \calN) \to \Omega(\Map(\calN, \calM))$. We then define the \emph{transgression map}
\[ \mathbb{T} = \mu_* \circ \ev^*: \Omega(\calM) \to  \Omega(\Map(\calN, \calM)).\]

We will be exclusively interested in the case where $\calN = T[1]\Sigma$, where $\Sigma$ is a compact oriented $k$-dimensional manifold without boundary. In this case, $\calN$ has a canonical measure $\mu$, given by
\[ \int_{\calN} f d\mu = \int_\Sigma f\]
for $f \in C^\infty(\calN) = \Omega(\Sigma)$. By definition, this integral vanishes when $|f| < k$.

An important point is that $\mu_*$ lowers the degree by $k$ but preserves the form grading, whereas $\ev^*$ preserves both gradings. Thus the transgression map takes $\Omega^p_\ell(\calM)$ to $\Omega^p_{\ell-k}(\Map(\calN, \calM))$.

Another fact that we will use is that the lifting of vector fields from $\calM$ and the de Rham operator are compatible with the transgression map, in the sense that, for any $Y \in \mathfrak{X}(\calM)$ and $\beta \in \Omega(\calM)$,
\begin{align}\label{eqn:liftcartancalc}
    \mathbb{T}(i_Y\beta) &= i_{\breve{Y}}(\mathbb{T}(\beta)), & \mathbb{T}(d\beta) &= d(\mathbb{T}(\beta)),
\end{align}
and therefore $\mathbb{T}(L_Y\beta) = L_{\breve{Y}}(\mathbb{T}(\beta))$ as well. 

The interaction of vector fields on $\calN$ with the transgression map is more subtle and interesting. We will later use the following fact from \cite[Lemma 2.6]{cattaneo2001aksz}.
\begin{lemma}\label{lemma:invariance}
    If $X$ is a $\mu$-invariant vector field on $\calN$, then $L_{\hat{X}} (\mathbb{T}(\beta)) = 0$ for any $\beta \in \Omega(\calM)$.
\end{lemma}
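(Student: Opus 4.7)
The plan is to reduce the claim to a ``divergence-free'' statement on $\calN$ by exploiting the compatibility of the evaluation map with the two lifts of $X$. The first step is to establish the intertwining identity
\[
\ev_*(\hat X + X)=0,
\]
where on the left I treat $\hat X$ as a vector field on $\Map(\calN,\calM)\times\calN$ via the first factor and $X$ via the second (the sign depends on whether $\hat X$ is defined as the infinitesimal generator of pre-composition by $\phi_t$ or $\phi_{-t}$). This identity is essentially the content of the definition of $\hat X$: at a map $F\colon\calN\to\calM$, the tangent vector $\hat X|_F$ is (up to sign) the section $dF\circ X$ of $F^*T\calM$, which is exactly what the evaluation map produces when differentiating in the $\calN$-direction at $(F,x)$ along $X$.

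Pulling back a form $\beta\in\Omega(\calM)$ by $\ev$ and applying this identity immediately gives
\[
L_{\hat X}\,\ev^*\beta \;=\; -\,L_X\,\ev^*\beta.
\]
Now I apply the fiber-integration map $\mu_*$ to both sides. On the left, $\hat X$ acts only on the $\Map(\calN,\calM)$ factor while $\mu_*$ integrates over $\calN$, so $\mu_*$ commutes with $L_{\hat X}$ and the left-hand side becomes $L_{\hat X}\,\mathbb{T}(\beta)$. On the right, $X$ lives entirely on $\calN$, and $\mu$-invariance of $X$ means that its Berezinian divergence vanishes. This is exactly the hypothesis needed for the graded divergence identity $\mu_*\circ L_X=0$, which on $\calN=T[1]\Sigma$ is a consequence of Stokes' theorem together with the assumption that $\Sigma$ is closed. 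Hence the right-hand side vanishes, yielding $L_{\hat X}\mathbb{T}(\beta)=0$.

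The only real subtlety is pinning down the sign in the intertwining identity $\ev_*(\hat X + X)=0$; once the convention defining $\hat X$ is fixed this is purely formal. The remaining ingredients, namely that $\mu_*$ commutes with $L_{\hat X}$ (because $\hat X$ is tangent to the opposite factor) and that $\mu$-invariance of $X$ forces $\mu_*\circ L_X=0$, are standard in the Berezinian Cartan calculus and require no more than bookkeeping.
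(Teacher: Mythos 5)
Your argument is correct, and it is essentially the standard one: the paper does not prove this lemma itself but cites it from Cattaneo--Felder (\cite[Lemma 2.6]{cattaneo2001aksz}), whose proof is the same in substance as yours --- the $\ev$-relatedness of $\hat X$ (on the mapping-space factor) with $-X$ (on $\calN$), commutation of $\mu_*$ with $L_{\hat X}$, and vanishing of $\mu_*\circ L_X$ by $\mu$-invariance. The only cosmetic difference is that one can phrase the last two steps at the level of flows (precomposition by the $\mu$-preserving flow of $X$ fixes $\mathbb{T}(\beta)$, so differentiating gives $L_{\hat X}\mathbb{T}(\beta)=0$), which sidesteps the sign bookkeeping you flag; in the application in the paper, $X=Q_d$, the invariance input is just $\int_\Sigma d\eta=0$ by Stokes' theorem, as the paper notes.
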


\subsection{Transgression of graded contact structures}
\label{sec:transgression}
An important feature of AKSZ theory is that the space of fields inherits a geometric structure from the target space. In the usual AKSZ theory (\cite{AKSZ}, also see \cite{cattaneo2001aksz, cattaneo-schaetz, mnev2017lectures, roytenberg:aksz}), the space of fields inherits a symplectic structure from the target space. An analogous result for poly-symplectic structures is in \cite{contreras2-martinez2021poly}. Using similar techniques, we consider here the lifting of contact structures.

Suppose that $\calM$ is a degree $n$ contact $\N$-manifold with contact form $\alpha \in \Omega^1_n(\calM)$, and let $\calN = T[1]\Sigma$, where $\Sigma$ is a compact oriented $k$-dimensional manifold (without boundary). Recall that the de Rham operator on $\Omega(\Sigma)$ can be viewed as a homological vector field $Q_d$ on $\calN$.

\begin{thm}\label{thm:transgerssion2}
The form $\breve{\alpha}= \mathbb{T}(\alpha)$ is a degree $n-k$ weak contact form on $\calF= \Map(\mathcal{N},\mathcal{M})$. In addition, if $\calM$ is a degree $n$ contact $\N Q$-manifold with homological vector field $Q_\calM$, then $\calF$ is a degree $n-k$ contact $Q$-manifold with homological vector field $\hat{Q}_d \pm \breve{Q}_\calM$.

\end{thm}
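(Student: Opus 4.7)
The plan is to prove the two assertions in sequence, using the compatibility of transgression with the Cartan calculus recorded in \eqref{eqn:liftcartancalc} together with the invariance principle of Lemma \ref{lemma:invariance}. The degree count is immediate: since $\mu_*$ lowers internal degree by $k$ but preserves form degree, $\breve{\alpha} = \mathbb{T}(\alpha)$ lies in $\Omega^1_{n-k}(\calF)$.

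To verify the weak contact property, I invoke the graded analogue of Proposition \ref{prop:contact}(2), reducing the problem to injectivity of
\[\flat_{\calF}(X) \;=\; (i_{X}\breve{\alpha})\,\breve{\alpha} \;+\; i_{X}d\breve{\alpha}.\]
The argument proceeds pointwise: at $\phi \in \calF$, tangent vectors correspond to suitably-graded sections $\xi$ of $\phi^{*}T\calM$, and the identity $d\breve{\alpha} = \mathbb{T}(d\alpha)$ means that both summands above evaluate on $\xi$ via pointwise contraction with $\alpha$ (respectively $d\alpha$), followed by integration over $\Sigma$. Injectivity of $\flat_{\calF}$ then follows from the pointwise bijectivity of $\flat_{\calM}$ combined with the nondegeneracy of integration over $\Sigma$.

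For the $Q$-structure, I set $Q_{\calF} = \hat{Q}_d \pm \breve{Q}_{\calM}$, with the sign fixed by the hat/breve conventions. The identities $[\hat{Q}_d,\hat{Q}_d]=0$, $[\breve{Q}_{\calM},\breve{Q}_{\calM}]=0$, and $[\hat{Q}_d,\breve{Q}_{\calM}]=0$, which follow respectively from $Q_d^{2}=0$, $Q_{\calM}^{2}=0$, and the commutativity of pre- and post-composition noted in Section \ref{sec:lifting}, give $Q_{\calF}^{2}=0$. The contribution of $\hat{Q}_d$ to the contact equation vanishes: $L_{\hat{Q}_d}\breve{\alpha}=0$ by Lemma \ref{lemma:invariance}, since Stokes' theorem on the closed manifold $\Sigma$ ensures $Q_d$-invariance of the Berezinian measure on $\calN$.

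The main obstacle is to show that $L_{\breve{Q}_{\calM}}\breve{\alpha} = \mathbb{T}(L_{Q_{\calM}}\alpha) = \pm\,\mathbb{T}\!\left(R(S)\,\alpha\right)$ is a function multiple of $\breve{\alpha}$, since transgression is not multiplicative in the function argument and so the factor $R(S)$ cannot simply be pulled outside. The cleanest route I see is to pass through the symplectization of Section \ref{sec:symplectization}: the symplectization $\tilde{\calM}$ is a degree $n$ symplectic $\N Q$-manifold whose Hamiltonian $\tilde{S}=e^{t}S$ genuinely preserves $\tilde{\omega}$, so standard AKSZ furnishes an honest symplectic homological vector field on $\Map(\calN,\tilde{\calM})$. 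The graded version of Proposition \ref{prp:symplectization} then allows me to descend this data to a homological contact vector field on $\calF$, with the conformal factor realized as the Reeb contribution $R(S)$ integrated against the transgressed Euler coordinate. This descent step is the technically delicate part of the proof; an alternative is to perform the verification directly by testing $L_{\breve{Q}_\calM}\breve{\alpha}$ on $\ker\breve{\alpha}$ pointwise, where the identity $L_{Q_\calM}\alpha\in\alpha\cdot C^\infty(\calM)$ on $\calM$ does descend to distributional preservation after integrating over $\Sigma$.
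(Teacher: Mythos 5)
There are genuine gaps, and both occur at exactly the points where your route departs from the paper's. For the weak-contact statement, the paper does not argue via injectivity of the nonhomogeneous map $\flat_\calF$; it invokes Theorem \ref{thm:ctctSchw} to split the target as (symplectic factor)$\times\R[n]$, writes $\breve{\alpha}$ explicitly in Darboux coordinates as in \eqref{eq:alphabar}, and analyzes $\ker\breve{\alpha}$ and $\ker d\breve{\alpha}$ with the $\theta$-component treated separately. Your argument instead asserts that both summands of $\flat_{\calF}(X)=(i_X\breve{\alpha})\breve{\alpha}+i_Xd\breve{\alpha}$ are ``pointwise contraction followed by integration,'' so that injectivity follows from fiberwise bijectivity of $\flat_\calM$ plus nondegeneracy of integration. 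That identification is false: $(i_X\breve{\alpha})\breve{\alpha}$ is a \emph{product of two integrals} over $\Sigma$, not the transgression of $(i_\cdot\alpha)\alpha$, so $\flat_\calF$ is not the fiberwise transgression of $\flat_\calM$. The breakdown is precisely in the Reeb/$\theta$-direction: under the splitting, $d\alpha$ is pulled back from the symplectic factor, so $d\breve{\alpha}=\mathbb{T}(d\alpha)$ has no $\delta\theta$-legs, and a variation $X$ supported purely in the $\theta$-direction whose component is a form on $\Sigma$ of degree $<k$ satisfies $i_X\breve{\alpha}=0$ (the Berezinian integral kills it) and $i_Xd\breve{\alpha}=0$, hence $\flat_\calF(X)=0$. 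So the conclusion you claim cannot be reached by your mechanism, and the $\theta$-direction is exactly where a fiberwise-plus-integration argument must be replaced by the explicit coordinate analysis the paper performs. (Also note that the reduction to injectivity of $\flat$ via Proposition \ref{prop:contact}(2) is itself only stated in the paper for the nondegenerate finite-dimensional case; the weak-contact definition is about the map $\D\to\D^*$.)

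For the $Q$-structure, your treatment of homologicity and of the contactness of $\hat{Q}_d$ (via Lemma \ref{lemma:invariance} and Stokes) coincides with the paper's. But the contactness of $\breve{Q}_\calM$, which you correctly identify as the nontrivial point when $L_{Q_\calM}\alpha=\pm R(S_\calM)\alpha\neq 0$ (transgression is not $C^\infty(\calM)$-linear, so the conformal factor does not pull out), is not actually established in your proposal: the symplectization route only cites Proposition \ref{prp:symplectization} and then defers the ``descent'' to $\calF$ as ``technically delicate'' without carrying it out, and the alternative sketch is not an argument --- vanishing of $\int_\Sigma \alpha(\eta)$ for $\eta\in\ker\breve{\alpha}$ does not make the weighted integral $\int_\Sigma \ev^*(R(S_\calM))\,\alpha(\eta)$ vanish, so ``distributional preservation after integrating'' does not follow. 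By contrast, the paper disposes of this step in one line, from the compatibility \eqref{eqn:liftcartancalc} giving $L_{\breve{Q}_\calM}\breve{\alpha}=\mathbb{T}(L_{Q_\calM}\alpha)$ together with $Q_\calM$ being contact on $\calM$. You have put your finger on a real subtlety in that step, which is to your credit, but flagging an obstacle and outlining an uncompleted detour leaves the second assertion of the theorem unproved in your write-up.
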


\begin{proof}
The proof is similar to that of the analogous result for symplectic structures in the usual AKSZ theory. Using Theorem \ref{thm:ctctSchw} and Darboux coordinates $\{q^i,p_i\}$ on the symplectic component, we have
\begin{equation}\label{eq:alphabar}
    \breve{\alpha}=\mathbb{T}\alpha=\int_{\mathcal{N}}\left(\frac{\delta \theta}{n}+q^i \delta p_i\right) d\mu. 
\end{equation}
To show that $\breve{\alpha}$ is weak contact, we need to check that $\ker(d\breve{\alpha})$ has trivial intersection with $\ker \breve{\alpha}$. If we denote $X=X_\theta\delta \theta + X_{q^i}\delta {q^i} + X_{p_i} \delta {p_i}$, then $X$ being in $\ker(d\breve{\alpha})$ implies that $X_{q^i}=X_{p_i}=0$ for all $i$. On the other hand, $X$ being in $\ker(\breve{\alpha})$ implies that $X_\theta-q^i X_{p_i}=0$. Combining these conditions, we see that $X \in \ker(d\breve{\alpha}) \cap \breve{\alpha}$ if and only if $X=0$.

The fact that $Q = \hat{Q}_d \pm \breve{Q}_\calM$ is homological follows immediately from $Q_d$ and $Q_\calM$ being homological, together with the fact that $[\hat{Q}_d, \breve{Q}_\calM] =0$ (see Section \ref{sec:lifting}).

Since $Q_\calM$ is a contact vector field on $\calM$, it follows from \eqref{eqn:liftcartancalc} that $\breve{Q}_\calM$ is a contact vector field on $\calF$. Additionally, it follows from Lemma \ref{lemma:invariance} that $\hat{Q}_d$ is contact. We note that the $\mu$-invariance of $Q_d$ is equivalent to the more elementary fact that $\int_\Sigma d\eta = 0$ for any $\eta \in \Omega(\Sigma)$, due to Stokes' Theorem.
\end{proof}

The following is an immediate consequence of the fact (see Section \ref{sec:lifting} that Cartan calculus is preserved by the lifting of forms and vector fields from $\calM$ to $\calF$.
\begin{prop}
    Let $R$ be the Reeb vector field on $\calM$. Then $\breve{R}$ is the Reeb vector field on $\calF$.
\end{prop}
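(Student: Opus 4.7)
The plan is to verify directly the two defining equations of the Reeb vector field of the weak contact form $\breve{\alpha}$ on $\calF$, namely $i_{\breve{R}}\breve{\alpha}=1$ and $i_{\breve{R}}\,d\breve{\alpha}=0$. By the uniqueness of the Reeb on a weak contact manifold, recalled at the end of Section~\ref{sec:ctctgraded}, this suffices to identify $\breve{R}$ as the Reeb of $\breve{\alpha}$ on $\calF$.

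Both identities follow essentially formally from the compatibility of the Cartan calculus with the transgression map recorded in \eqref{eqn:liftcartancalc}. Combining $\breve{\alpha}=\mathbb{T}(\alpha)$ with the intertwining relations $\mathbb{T}(i_R\beta)=i_{\breve{R}}\mathbb{T}(\beta)$ and $\mathbb{T}(d\beta)=d\mathbb{T}(\beta)$, we would compute
\begin{align*}
i_{\breve{R}}\,\breve{\alpha} &= \mathbb{T}(i_R\alpha) = \mathbb{T}(1) = 1,\\
i_{\breve{R}}\,d\breve{\alpha} &= i_{\breve{R}}\,\mathbb{T}(d\alpha) = \mathbb{T}(i_R\,d\alpha) = 0,
\end{align*}
where the last equality in each line uses the defining equations $i_R\alpha = 1$ and $i_R d\alpha = 0$ of the Reeb on $\calM$.

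The main obstacle I expect is justifying the identification $\mathbb{T}(1)=1$ carefully, since the transgression map lowers internal degree by $k$, so the naive bidegree bookkeeping on the mapping space must be verified with care. The cleanest way to address this is to unpack both sides in the Darboux coordinates from Theorem~\ref{thm:ctctSchw} that underlie formula \eqref{eq:alphabar}: in those coordinates the Reeb on $\calM$ has the simple form $R=n\,\partial/\partial\theta$, its post-composition lift $\breve{R}$ differentiates only in the $\theta$-component of the fields, and the contraction of $\breve{R}$ with the explicit expression for $\breve{\alpha}$ can be evaluated directly to produce the desired normalization, independently of any convention on $\mathbb{T}(1)$. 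Once these two equations are verified, the conclusion is immediate from the uniqueness statement for the Reeb recalled above.
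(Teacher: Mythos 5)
Your approach is the same as the paper's: there the proposition is justified in one line as an immediate consequence of the compatibility \eqref{eqn:liftcartancalc} of the transgression map with contractions and $d$, which is exactly the pair of computations you write down. You have also put your finger on the one genuinely delicate point, the normalization $i_{\breve{R}}\breve{\alpha}=1$. However, the fix you propose does not go through. With the paper's convention for the Berezinian measure on $\calN=T[1]\Sigma$ (the integral $\int_{\calN}f\,d\mu=\int_\Sigma f$ vanishes whenever the form degree of $f$ is less than $k=\dim\Sigma$), one has $\mathbb{T}(1)=\int_\Sigma 1=0$ for $k\geq 1$, not $1$. The coordinate computation you suggest confirms rather than repairs this: in the coordinates underlying \eqref{eq:alphabar} one indeed has $R=n\,\partial/\partial\theta$, but its post-composition lift $\breve{R}$ contracts with $\delta\boldsymbol{\theta}$ to the constant ($0$-form) function $n$ on $\Sigma$, so $i_{\breve{R}}\breve{\alpha}=\tfrac{1}{n}\int_\Sigma n=\int_\Sigma 1=0$ whenever $k\geq 1$. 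A degree count shows the same obstruction: the lift preserves degree, so $|\breve{R}|=-n$, whereas the Reeb vector field of the degree $n-k$ form $\breve{\alpha}$ must have degree $-(n-k)$.

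What your argument does establish is $i_{\breve{R}}\,d\breve{\alpha}=\mathbb{T}(i_R d\alpha)=0$ and $L_{\breve{R}}\breve{\alpha}=\mathbb{T}(L_R\alpha)=0$, i.e.\ that $\breve{R}$ is a contact vector field annihilating $d\breve{\alpha}$; these are in fact the only properties of $\breve{R}$ invoked later (in the proof of Theorem \ref{thm:ct-CME}). The normalization, by contrast, cannot come from the lift itself: a vector field satisfying both Reeb equations for $\breve{\alpha}$ would have to shift the top-form component of the $\boldsymbol{\theta}$-superfield (e.g.\ translate $\boldsymbol{\theta}$ by $n$ times a chosen top form on $\Sigma$ of total integral $1$), and no such field is of the form $\breve{R}$ for $R\in\mathfrak{X}(\calM)$. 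So your final step, as written, fails; to be fair, the same gap sits inside the paper's own ``immediate consequence'' phrasing, and your suspicion about $\mathbb{T}(1)=1$ was exactly right --- but the coordinate check exposes the problem rather than resolving it, so one must either settle for the weaker statements above or introduce an additional choice on $\Sigma$ to produce an honest Reeb field on $\calF$.
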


\subsection{The Classical Master Equation}\label{Poly-CME}

Suppose that $\calM$ is a degree $n$ contact $\N Q$-manifold with contact form $\alpha$ and homological vector field $Q_\calM$, and let $\calN = T[1]\Sigma$, where $\Sigma$ is a compact oriented $k$-dimensional manifold without boundary. By Theorem \ref{Poly-CME}, we obtain a degree $n-k$ contact form $\breve{\alpha}$ on $\calF = \Map(\calN,\calM)$, so we can define Jacobi and Cartan brackets
 \begin{align}\label{eq:brct on F}
     (F,G)_J&= i_{[X_F,X_G]}\breve{\alpha}, \\ \label{eq:Cbrct on F}
     (F,G)_C&= X_F(G) = (F,G)_J + (-1)^{(n-k)(|F|+1)}\breve{R}(F)G,
 \end{align}
 defined on functions $F,G \in C^\infty(\calF)$ for which the contact vector fields $X_F$ and $X_G$ exist.

Let $S_\calM= i_{Q_\calM} \alpha$ be the degree $n+1$ function on $\calM$ corresponding to the contact vector field $Q_\calM$. The transgression of $S_\calM$ is $\breve{S}_\calM = \mathbb{T}(S_\calM) = i_{\breve{Q}_\calM} \breve{\alpha}$, which is the degree $n-k+1$ function on $\calF$ corresponding to the contact vector field $\breve{Q}_\calM$. Additionally, we define $\hat{S}_\calM = i_{\hat{Q}_d} \breve{\alpha}$, which the degree $n-k+1$ function on $\calF$ corresponding to the contact vector field $\hat{Q}_d$. 

We set $S_\calF = \hat{S}_\calM + (-1)^{n+1} \breve{S}_\calM$. It is the function on $\calF$ corresponding to the contact vector field $Q_\calF = \hat{Q}_d + (-1)^{n+1} \breve{Q}_\calM$. As explained in \cite{roytenberg:aksz} in the case of ordinary AKSZ theory, the sign is chosen so the equations of motion in the induced classical theory have dg maps as solutions.

\begin{thm}\label{thm:ct-CME}
The following statements hold:
\begin{enumerate}
\item[(i)]$S_\calF$ is a solution of classical master equation 
$(S_\calF, S_\calF)_J=0$.
\item[(ii)] $S_\calF$ is a solution of the classical master equation for the Cartan bracket, $(S_\calF,S_\calF)_C=0$, if and only if $S_\calM$ is $R$-invariant. 
\end{enumerate}
\end{thm}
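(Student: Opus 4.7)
The plan is to handle (i) from the homological condition on $Q_\calF$, and (ii) by reducing the Cartan master equation to a transgressed statement about $R(S_\calM)$.

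\textbf{Part (i).} By construction the contact vector field associated to $S_\calF$ is $X_{S_\calF} = Q_\calF = \hat{Q}_d + (-1)^{n+1}\breve{Q}_\calM$, which is homological by Theorem \ref{thm:transgerssion2}. Hence $[X_{S_\calF}, X_{S_\calF}] = 2Q_\calF^2 = 0$, and \eqref{eq:brct on F} gives
\[ (S_\calF, S_\calF)_J = i_{[Q_\calF,Q_\calF]} \breve{\alpha} = 0. \]

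\textbf{Part (ii).} I would first combine \eqref{eq:Cbrct on F} with part (i) to rewrite
\[ (S_\calF, S_\calF)_C = \pm\, \breve{R}(S_\calF) \cdot S_\calF, \]
reducing the question to computing $\breve{R}(S_\calF) = \breve{R}(\hat{S}_\calM) + (-1)^{n+1}\breve{R}(\breve{S}_\calM)$ summand by summand. For the first summand, writing $\hat{S}_\calM = i_{\hat{Q}_d}\breve{\alpha}$ and using the graded commutator $[L_{\breve{R}}, i_{\hat{Q}_d}] = i_{[\breve{R},\hat{Q}_d]}$, the fact that $\breve{R}$ and $\hat{Q}_d$ are lifted from the two distinct factors (Section \ref{sec:lifting}) forces $[\breve{R},\hat{Q}_d]=0$, so
\[ \breve{R}(\hat{S}_\calM) = \pm\, i_{\hat{Q}_d} L_{\breve{R}}\breve{\alpha}. \]
The transgression compatibility \eqref{eqn:liftcartancalc} then identifies $L_{\breve{R}}\breve{\alpha} = \mathbb{T}(L_R\alpha)$, and $L_R\alpha = 0$ because $i_R\alpha = 1$ is constant and $i_Rd\alpha = 0$. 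For the second summand, \eqref{eqn:liftcartancalc} applied directly yields $\breve{R}(\breve{S}_\calM) = \mathbb{T}(R(S_\calM))$.

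Combining these gives $\breve{R}(S_\calF) = (-1)^{n+1}\mathbb{T}(R(S_\calM))$, and hence
\[ (S_\calF, S_\calF)_C = \pm\, \mathbb{T}(R(S_\calM)) \cdot S_\calF. \]
The forward implication is then immediate. For the converse, I would argue that $S_\calF$ is not a zero divisor in $C^\infty(\calF)$, as one sees from the Darboux-type expression \eqref{eq:alphabar}; this forces $\mathbb{T}(R(S_\calM)) = 0$, and injectivity of $\mathbb{T}$ on the relevant degree (checkable by evaluating against constant maps into $\calM$) yields $R(S_\calM)=0$. The main obstacle is not conceptual but the sign bookkeeping across the graded Cartan calculus and the splitting of $S_\calF$; I would postpone tracking the precise signs until after the structural identity $\breve{R}(S_\calF) \propto \mathbb{T}(R(S_\calM))$ has been established, in order to keep the main thread of the argument transparent.
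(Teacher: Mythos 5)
Your proof of (i), and your reduction of (ii) to the identity $\breve{R}(S_\calF) = (-1)^{n+1}\mathbb{T}(R(S_\calM))$, follow the paper's argument exactly: the same use of $[Q_\calF,Q_\calF]=0$, the same splitting of $\breve{R}(S_\calF)$ into the two summands, and the same ingredients $[\breve{R},\hat{Q}_d]=0$ and $L_{\breve{R}}\breve{\alpha}=\mathbb{T}(L_R\alpha)=0$. Up to that point the proposal is correct and matches the paper.

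The difficulty lies in the two devices you add for the ``only if'' direction, and both would fail as stated. First, the claim that $S_\calF$ is not a zero divisor in $C^\infty(\calF)$ is false in general: $S_\calF$ is homogeneous of degree $n-k+1$, so whenever this degree is odd, $S_\calF$ is an odd function and $S_\calF^2=0$, making it a zero divisor; likewise $\breve{R}(S_\calF)$ has degree $1$ and is nilpotent. No Darboux-type expression such as \eqref{eq:alphabar} can rescue a statement of this kind, and even in the classical dimension $k=n+1$, where $S_\calF$ has degree $0$, it vanishes at constant fields, so a pointwise ``$fg=0\Rightarrow f=0$'' argument does not go through without more care. Second, injectivity of $\mathbb{T}$ cannot be tested against constant maps: $R(S_\calM)$ has degree $1$, so its pullback along a constant map is zero, and even for degree-$0$ functions the Berezinian integral over $T[1]\Sigma$ of such a pullback vanishes when $k>0$, since the integrand has no top form-degree component. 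Constant maps therefore detect nothing; to conclude $R(S_\calM)=0$ from $\mathbb{T}(R(S_\calM))=0$ one must evaluate on fields with nonvanishing higher form-degree components (e.g.\ expand in local coordinates and compare coefficients of independent superfield components), and the same kind of locality argument is what is really needed to pass from $\breve{R}(S_\calF)\,S_\calF=0$ to $\breve{R}(S_\calF)=0$ --- a step the paper itself treats by simply asserting that $(S_\calF,S_\calF)_C$ vanishes if and only if $\breve{R}(S_\calF)=0$. So your instinct that the converse deserves justification is sound, but the zero-divisor claim and the constant-map test are not the way to supply it.
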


\begin{proof}
From \eqref{eq:brct on F}, we have
\[ (S_\calF,S_\calF)_J = i_{[Q_\calF,Q_\calF]} \breve{\alpha},\]
which vanishes because $Q_\calF$ is homological.

From \eqref{eq:Cbrct on F}, we then have
\[ (S_\calF,S_\calF)_C = (-1)^{(n-k)}\breve{R}(S_\calF)S_\calF,\]
which vanishes if and only if $\breve{R}(S_\calF)=0$. However, since $[\breve{R},\hat{Q}_d]=0$ and $L_{\breve{R}} \breve{\alpha} = \mathbb{T}(L_R \alpha) = 0$, we have that
\[ \breve{R}(\hat{S}_\calM) = L_{\breve{R}} i_{\hat{Q}_d} \breve{\alpha} = i_{[\breve{R},\hat{Q}_d]} \breve{\alpha} + i_{\hat{Q}_d}L_{\breve{R}} \breve{\alpha} = 0,\]
so $\breve{R}(S_\calF) = (-1)^{n+1}\breve{R}(\breve{S}_\calM) = (-1)^{n+1}\mathbb{T}(R(S_\calM))$, and the second statement follows.
\end{proof}

\subsection{Symplectization and AKSZ}\label{sec:symplectizationaksz}

Suppose that $\calM$ is a degree $n$ contact $\N Q$-manifold with contact form $\alpha$ and contact Hamiltonian $S_\calM$, and suppose that $\calN = T[1]\Sigma$, where $\Sigma$ is a compact oriented $k$-dimensional manifold. Then, as we saw in Section \ref{Poly-CME}, we have the function $S_\calF = \hat{S}_\calM + (-1)^{n+1} \breve{S}_\calM$ on $\calF = \Map(\calN,\calM)$, which satisfies the classical master equation $(S_\calF, S_\calF)_J=0$.

On the other hand, we can consider the symplectization $\tilde{\calM} = \calM \times \R$, which is a degree $n$ symplectic $\N Q$-manifold with symplectic form $\omega = d(e^t \alpha)$ and Hamiltonian function $S_{\tilde{\calM}} = e^t S_\calM$. Applying the usual AKSZ theory, we obtain the space of fields $\tilde{\calF} = \Map(\calN,\tilde{\calM})$, with the function $S_{\tilde{\calF}} = \hat{S}_{\tilde{\calM}} + (-1)^{n+1} \breve{S}_{\tilde{\calM}}$, where $\breve{S}_{\tilde{\calM}} = \mathbb{T}(S_{\tilde{\calM}})$ and $\hat{S}_{\tilde{\calM}} = i_{\hat{Q}_d}\breve{\lambda}$, with $\lambda = \frac{1}{n}i_\epsilon \omega$.

Consider the zero section $\tau: \calM \to \tilde{\calM}$. Applying the $\Map(\calN,-)$ functor, we obtain $\tau_*: \calF \to \tilde{\calF}$, which makes the following diagram commute:
\[ 
\begin{tikzcd}
\Omega(\tilde{\calM}) \ar[r,"\tau^*"] \ar[d,"\mathbb{T}"] & \Omega(\calM) \ar[d,"\mathbb{T}"] \\
\Omega(\tilde{\calF}) \ar[r,"(\tau_*)^*"] & \Omega(\calF)
\end{tikzcd}
\]

\begin{prop}\label{prop:symplectize}
    $(\tau_*)^*S_{\tilde{\calF}} = S_\calF$.
\end{prop}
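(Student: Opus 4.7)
The plan is to decompose $S_{\tilde{\calF}} = \hat{S}_{\tilde{\calM}} + (-1)^{n+1}\breve{S}_{\tilde{\calM}}$ and verify that each summand pulls back under $(\tau_*)^*$ to the corresponding piece of $S_\calF = \hat{S}_\calM + (-1)^{n+1}\breve{S}_\calM$. The first summand is handled by the commuting square displayed just above the statement: since $\tau^*(e^t S_\calM) = S_\calM$, naturality of transgression gives
\[ (\tau_*)^*\breve{S}_{\tilde{\calM}} = (\tau_*)^*\mathbb{T}(e^t S_\calM) = \mathbb{T}\bigl(\tau^*(e^t S_\calM)\bigr) = \mathbb{T}(S_\calM) = \breve{S}_\calM. \]

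For the second summand, I first observe that $\tau_*$ intertwines the two copies of $\hat{Q}_d$ (on $\calF$ and on $\tilde{\calF}$), because $\hat{Q}_d$ is induced by the pre-composition action of $Q_d \in \mathfrak{X}(\calN)$ on $\Map(\calN,-)$, which commutes with post-composition by $\tau$. Hence $(\tau_*)^*\hat{S}_{\tilde{\calM}} = i_{\hat{Q}_d}(\tau_*)^*\breve{\lambda} = i_{\hat{Q}_d}\mathbb{T}(\tau^*\lambda)$, and the problem reduces to computing $\tau^*\lambda$ on $\calM$.

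To compute $\tau^*\lambda$, I would apply the degree-$0$ Cartan formula $L_\epsilon = di_\epsilon + i_\epsilon d$ to $e^t \alpha$. Since $\epsilon(e^t)=0$ and $L_\epsilon\alpha = n\alpha$, we have $L_\epsilon(e^t\alpha)=ne^t\alpha$, while $i_\epsilon(e^t\alpha)=e^t\theta$ and $d(e^t\alpha)=\omega$. Rearranging gives
\[ \lambda = \tfrac{1}{n}i_\epsilon\omega = e^t\alpha - \tfrac{1}{n}d(e^t\theta), \]
and pulling back along the zero section yields $\tau^*\lambda = \alpha - \tfrac{1}{n}d\theta$. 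Consequently,
\[ (\tau_*)^*\hat{S}_{\tilde{\calM}} = i_{\hat{Q}_d}\breve{\alpha} - \tfrac{1}{n}i_{\hat{Q}_d}\,d\breve{\theta} = \hat{S}_\calM - \tfrac{1}{n}i_{\hat{Q}_d}\,d\breve{\theta}. \]

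The remaining correction $i_{\hat{Q}_d}\,d\breve{\theta}$ vanishes: applying the Cartan relation $i_{\hat{Q}_d}d = L_{\hat{Q}_d} + d\,i_{\hat{Q}_d}$ (with the sign appropriate for the degree-$1$ field $\hat{Q}_d$) to the $0$-form $\breve{\theta}$ gives $i_{\hat{Q}_d}\,d\breve{\theta} = L_{\hat{Q}_d}\breve{\theta}$, which is zero by Lemma \ref{lemma:invariance}. Combining the two pieces produces $(\tau_*)^*S_{\tilde{\calF}} = S_\calF$. The main subtle point is the mismatch between the canonical Liouville form $\lambda=\tfrac{1}{n}i_\epsilon\omega$ dictated by AKSZ and the naive primitive $e^t\alpha$ of $\omega$; recognizing that the discrepancy is exact and that its $\mathbb{T}$-image is annihilated via Lemma \ref{lemma:invariance} is the crux of the argument.
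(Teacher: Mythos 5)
Your proof is correct and follows essentially the same route as the paper's: split $S_{\tilde{\calF}}$ into its two summands, use naturality of $\mathbb{T}$ with respect to $\tau$ for the $\breve{S}$ piece, compute $\tau^*\lambda = \alpha - \tfrac{1}{n}d\theta$ via the Euler vector field Cartan identity, and kill the exact correction term through Lemma \ref{lemma:invariance}. The only differences are cosmetic and welcome: you carry out the Cartan computation on $\tilde{\calM}$ before pulling back rather than after, and you explicitly justify that $\tau_*$ intertwines the two lifts $\hat{Q}_d$, a point the paper leaves implicit.
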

\begin{proof}
We first observe that $\tau^* S_{\tilde{\calM}} = \tau^*(e^t S_\calM) = S_\calM$. It follows that
\begin{equation}\label{eqn:taubreve}
(\tau_*)^*\breve{S}_{\tilde{\calM}} = (\tau_*)^*\mathbb{T}(S_{\tilde{\calM}}) = \mathbb{T}(\tau^* S_{\tilde{\calM}}) = \mathbb{T}(S_\calM) = \breve{S}_\calM.    
\end{equation} 

Next, we calculate
\[ \tau^* \lambda = \frac{1}{n} i_\epsilon d\alpha = \frac{1}{n}(L_\epsilon \alpha - di_\epsilon \alpha) = \alpha - \frac{1}{n}d(i_\epsilon \alpha).\]
Recall from Section \ref{sec:structure} that the Euler function is defined as $\theta = i_\epsilon \alpha$. We can then write $\tau^* \lambda = \alpha - \frac{1}{n}d\theta$, and it follows that
\begin{equation}
\label{eqn:tauhat}
\begin{split}
    (\tau_*)^*\hat{S}_{\tilde{\calM}} &= i_{\hat{Q}_d}(\tau_*)^*\mathbb{T}(\lambda) \\
    &=  i_{\hat{Q}_d}\mathbb{T}(\tau^*\lambda) \\
    &=  i_{\hat{Q}_d}\mathbb{T}\left(\alpha - \frac{1}{n}d\theta\right) \\
    &= \hat{S}_\calM - \frac{1}{n}L_{\hat{Q}_d}\breve{\theta}\\
    &= \hat{S}_\calM,
\end{split}
\end{equation}
where the last step uses Lemma \ref{lemma:invariance}. The result follows from \eqref{eqn:taubreve} and \eqref{eqn:tauhat}.
\end{proof}

\subsection{Classical theory}\label{sec:classical}

In certain dimensions, classical elements of AKSZ theory can be extracted by restricting degree $0$ data to the base of $\calF$, which is the classical space of fields $\Hom(\calN,\calM)$. In the physics literature, this process is often described as setting the antifields to zero.

In particular, if $\Sigma$ is $(n+1)$-dimensional, then the function $S_\calF$ is of degree $0$. This induces a function, which we denote as $\calS$, on $\Hom(\calN,\calM)$. The function $\calS$ is the action functional for the classical field theory, and it can be described more explicitly as follows.

Given a map of graded manifolds $\phi: T[1]\Sigma \to \calM$, we can pull the contact form $\alpha$ back by $\phi$ to get a degree $n$ $1$-form $\phi^*\alpha$ on $T[1]\Sigma$. We can then contract with the degree $1$ vector field $Q_d$ to get a degree $n+1$ function $i_{Q_d} \phi^*\alpha$. Using the identification $C^\infty(T[1]\Sigma) = \Omega(\Sigma)$, we can then integrate over $\Sigma$ to get the \emph{kinetic part} of the action
\[ \calS_{\mathrm{kin}}(\phi) = \int_{\Sigma} i_{Q_d} \phi^* \alpha,\]
which is the restriction of $\hat{S}_\calM$ to $\Hom(\calN,\calM)$.
The restriction of $\breve{S}_\calM$ is obtained by pulling the degree $n+1$ function $S_\calM$ back by $\phi$ and then integrating. This gives the \emph{interaction part} of the action
\[ \calS_{\mathrm{int}}(\phi) = \int_{\Sigma} \phi^* S_\calM.\]
Putting the two parts together, we obtain the \emph{AKSZ-contact action functional}
\begin{equation}\label{eqn:akszaction} 
\calS(\phi) = \calS_{\mathrm{kin}}(\phi) + (-1)^{n+1} \calS_{\mathrm{int}}(\phi) = \int_\Sigma i_{Q_d} \phi^* \alpha + (-1)^{n+1} \phi^* S_\calM.
\end{equation}

The action \eqref{eqn:akszaction} is constructed directly from the contact $\N Q$-manifold structure of $\calM$. Another approach is to construct an action via the symplectization $\tilde{\calM}$, using ordinary AKSZ theory, but restricting to fields taking values in the zero section. Proposition \ref{prop:symplectize} implies that this action will be equal to \eqref{eqn:akszaction}, but it is worth describing the construction and comparing the two actions more explicitly.

The AKSZ action (see e.g.\ \cite{roytenberg:aksz}) associated to $\tilde{\calM}$ is given by
\[ \calS_{\mathrm{AKSZ}}(\psi) = \int_\Sigma i_{Q_d} \psi^* \lambda + (-1)^{n+1} \psi^* S_{\tilde{\calM}}\]
for $\psi: T[1]\Sigma \to \tilde{\calM}$, where $\lambda = \frac{1}{n} i_\epsilon \omega$ and $S_{\tilde{\calM}} = e^t S_\calM$. Given a map $\phi: T[1]\Sigma \to \calM$, we can obtain the map $\phi_0 = \tau \circ \phi: T[1]\Sigma \to \tilde{\calM}$, where $\tau: \calM \to \tilde{\calM}$ is the zero section. We then define the \emph{BPV action}\footnote{The name refers to the authors of \cite{bascone2021jacobi}, where this approach was used in the $n=1$ case to obtain the Jacobi sigma model action; see Section \ref{sec:jacobisigmamodel}.} $\calS_{\mathrm{BPV}}$ by
\[ \calS_{\mathrm{BPV}}(\phi) = \calS_{\mathrm{AKSZ}}(\phi_0).\]

To compare this action with \eqref{eqn:akszaction}, we recall from the proof of Proposition \ref{prop:symplectize} that $\tau^* S_{\tilde{\calM}} = S_\calM$ and $\tau^* \lambda = \alpha - \frac{1}{n} d\theta$.
Using these identities, we obtain the following formula for the BPV action:
\begin{equation}
    \label{eqn:bpvaction}
    \calS_{\mathrm{BPV}}(\phi) = \int_\Sigma i_{Q_d} \phi^* \alpha - \frac{1}{n}L_{Q_d} \phi^* \theta + (-1)^{n+1} \phi^* S_\calM.
\end{equation}
We note that the BPV action \eqref{eqn:bpvaction} includes an additional term that does not appear in the AKSZ-contact action \eqref{eqn:akszaction}. However, the additional term is exact and is therefore inconsequential in the integral. 

\section{The Jacobi sigma model}
\label{sec:jacobi}

In this section, we consider the $n=1$ case of AKSZ-contact theory and show that it reproduces the Jacobi sigma model of \cite{bascone2021jacobi}.

\subsection{Jacobi manifolds}
Let $M$ be a manifold. A \emph{Jacobi structure} on $M$ is a Lie bracket on $C^\infty(M)$ that is local, in the sense that the support of $\{f,g\}$ is contained in the supports of both $f$ and $g$. A theorem due to Kirillov \cite{kirillov:local} says that local Lie brackets are always first-order in each entry. From this it can be deduced that a Jacobi structure is equivalently given by a pair $(\Lambda, E)$, where $\Lambda \in \mathfrak{X}^2(M)$ is a bivector field and $E \in \mathfrak{X}(M)$ is a vector field, such that
\begin{align}\label{eqn:jacobi}
    [\Lambda, \Lambda] &= 2E \wedge \Lambda, & [\Lambda, E] &= 0,
\end{align}
where $[\cdot, \cdot]$ is the Schouten bracket of multivector fields. The bracket is then given by
\[ \{f,g\} = \Lambda(df,dg) + f E(g) - E(f) g.\]
The characterization of Jacobi structures in terms of $\Lambda$ and $E$ is due to Lichnerowicz \cite{lichnerowicz}.

Examples of Jacobi manifolds include Poisson manifolds (in the case where $E=0$), contact manifolds, cosymplectic manifolds, and locally conformal symplectic manifolds (see, e.g., \cite{LMP:cohomology}).

\subsection{Degree $1$ contact $\N Q$-manifolds}
In \cite{mehta2013}, it was shown that Jacobi manifolds are in one-to-one correspondence with degree $1$ contact $\N Q$-manifolds, in the following way. By Theorem \ref{thm:ctctSchw} and the classification of degree $1$ symplectic $\N$-manifolds in \cite{roytenberg:symplectic}, we have that every degree $1$ contact $\N$-manifold $\calM$ is canonically of the form
\[ \calM \cong T^*[1]M \times \R[1]\]
for some manifold $M$. Thus we can make the identification
\[ C^\infty(\calM) \cong \mathfrak{X}^\bullet(M)[\theta],\]
where $\theta$ is the degree $1$ coordinate on $\R[1]$. 

Any degree $2$ function $S$ on $\calM$ then has the form
\[ S = \Lambda - E \theta,\]
where $\Lambda \in \mathfrak{X}^2(M)$ and $E \in \mathfrak{X}^1(M)$. In \cite[Proposition 3.3]{mehta2013} it was shown that 
\[ \{S,S\}_J = [\Lambda,\Lambda] - 2E\Lambda + 2[E,\Lambda]\theta,\]
which vanishes if and only if the equations \eqref{eqn:jacobi} hold.

\subsection{AKSZ-contact theory in the $n=1$ case}
\label{sec:jacobisigmamodel}

We can now give an explicit description of the AKSZ-contact action in the $n=1$ case. The input data consists of a compact oriented $2$-manifold $\Sigma$ and a Jacobi manifold $M$. The fields are maps of graded manifolds
\[ \phi: T[1]\Sigma \to \calM = T^*[1]M \times \R[1].\]
In local coordinates $\{x^i, p_i, \theta\}$ on $\calM$, where $p_i$ and $\theta$ are of degree $1$, a field can be described by smooth functions $X^i = \phi^*(x^i)$ and $1$-forms $\eta_i = \phi^*(p_i)$, $\Theta = \phi^*(\theta)$ on $M$. Collectively, the functions $X^i$ form a smooth map $X: \Sigma \to M$, and the $1$-forms $\eta_i, \Theta$ form an element of $\Omega^1(\Sigma; X^*(T^*M \times \R))$.

In coordinates, the contact form on $\calM$ has the form $\alpha = p_i dx^i + d\theta$. From this, we get the kinetic part of the action:
\[ \mathcal{S}_{\mathrm{kin}}(\phi) = \int_\Sigma \eta_i  dX^i + d\Theta.\]
(Note that, for notational simplicity, we are omitting wedge product symbols for differential forms in action functionals.)
The interaction terms come from the Jacobi structure on $M$. If we write $\Lambda = \frac{1}{2}\Lambda^{ij} p_i p_j$ and $E = E^i p_i$, then we have
\[ S = \Lambda - E\theta = \frac{1}{2}\Lambda^{ij} p_i p_j - E^i p_i \theta,\]
and thus
\[ \mathcal{S}_{\mathrm{int}}(\phi) = \int_\Sigma \phi^*S = \int_\Sigma \frac{1}{2}\Lambda^{ij} \eta_i  \eta_j - E^i \eta_i  \Theta.\]
Putting the kinetic and interaction terms together, we get the AKSZ-contact action functional
\begin{equation}
\label{eqn:jacobiaction} 
\mathcal S(\phi) = 
 \int_\Sigma \eta_i  dX^i + d\Theta + \frac{1}{2}\Lambda^{ij} \eta_i  \eta_j - E^i \eta_i  \Theta.
\end{equation}
The additional term in the BPV action \eqref{eqn:bpvaction} is $-d\Theta$, so
\begin{equation}
    \label{eqn:jacobibpvaction}
  \mathcal S_{\mathrm{BPV}}(\phi) = 
 \int_\Sigma \eta_i  dX^i + \frac{1}{2}\Lambda^{ij} \eta_i \eta_j - E^i \eta_i \Theta.  
\end{equation}
This is the Jacobi sigma model action, as given in \cite{bascone2021jacobi}.

\section{The Courant-Jacobi sigma model}
\label{sec:contactcourant}

In this section, we consider the $n=2$ case of AKSZ-contact theory, which gives a $3$-dimensional topological field theory associated to Courant-Jacobi algebroids.

\subsection{Courant-Jacobi algebroids}

Courant-Jacobi algebroids generalize Courant algebroids in a similar way to how Jacobi manifolds generalize Poisson manifolds. The prototypical example (see Example \ref{example:wade} below) was given (in skew-symmetric form) by Wade \cite{wade:conformal}, and the general definition is due to Grabowski and Marmo \cite{grabowski-marmo}. The slightly more general concept of \emph{contact Courant algebroid} was introduced in \cite{grabowski:gradedcontact}. The definition we use is taken from Das \cite{das:contactcourant}, who showed that it is equivalent to that in \cite{grabowski:gradedcontact,grabowski-marmo}.

\begin{definition}\label{def:contactcourant}
A \emph{Courant-Jacobi algebroid} is a vector bundle $E \to M$ equipped with a nondegenerate symmetric bilinear form $\langle \cdot, \cdot \rangle$, a bundle map $\rho: E \to TM \times \R$, and an $\R$-bilinear bracket $\Cour{\cdot,\cdot} : \Gamma(E) \times \Gamma(E) \to \Gamma(E)$ such that
\begin{enumerate}
    \item $\Cour{e, \Cour{e',e''}} = \Cour{\Cour{e,e'},e''} + \Cour{e', \Cour{e,e''}}$,
    \item $\Cour{e,fe'} = f \Cour{e,e'} + \hat{\rho}(e)(f) e'$,
    \item $\Cour{e,e} = \frac{1}{2} \mathcal{D}\langle e,e\rangle$,
    \item $\rho(e)(\langle e',e''\rangle) = \langle \Cour{e,e'},e''\rangle + \langle e', \Cour{e,e''}\rangle$,
\end{enumerate}
for all $e,e',e'' \in \Gamma(E)$ and $f\in C^\infty(M)$, where $\mathcal{D}: C^\infty(M) \to \Gamma(E)$ is given by $\langle \mathcal{D}f,e\rangle = \rho(e)(f)$. 
Here, sections of $TM \times \R$ are viewed as acting on $C^\infty(M)$ as first-order differential operators, and $\hat{\rho}$ is the symbol of $\rho$, i.e.\ the projection onto $TM$.
\end{definition}

\begin{example}\label{example:wade}
The \emph{standard Courant-Jacobi algebroid} on a manifold $M$ is
$E = (TM \times \R) \oplus (T^*M \times \R)$, where $\rho$ is the projection onto $TM \times \R$, and the bilinear form and bracket are given by
\begin{align*}
    \langle (X_1,f_1,\xi_1,g_1), (X_2,f_2,\xi_2,g_2) \rangle &= \langle X_1, \xi_2 \rangle + \langle X_2, \xi_1 \rangle + f_1 g_2 + f_2 g_1, \\
    \Cour{(X_1,f_1,\xi_1,g_1), (X_2,f_2,\xi_2,g_2)} &= (X_3,f_3,\xi_3,g_3),
\end{align*}
where
\begin{align*}
    X_3 &= [X_1,X_2], \\
    f_3 &= X_1(f_2) - X_2(f_1),\\
    \xi_3 &= \Lie_{X_1} \xi_2 - i_{X_2} d\xi_1 + f_1 \xi_2 - f_2 \xi_1 + f_2 dg_1 + g_2 df_1,\\
    g_3 &= X_1(g_2) - X_2(g_1) + i_{X_2}\xi_1 + f_1g_2.
\end{align*}
These formulas appear in \cite[Example 12.2]{grabowski:gradedcontact}. The skew-symmetric version of this bracket appeared in \cite{wade:conformal}, where it was shown that several interesting geometric structures, including Jacobi structures, arise as special cases of Dirac structures in $E$.
\end{example}

\begin{example}
    In the case when $M$ is a point, a Courant-Jacobi algebroid consists of a vector space $V$ equipped with a nondegenerate symmetric bilinear form, a distinguished vector $z \in V$, and a bilinear bracket such that 
\begin{enumerate}
    \item $\Cour{e, \Cour{e',e''}} = \Cour{\Cour{e,e'},e''} + \Cour{e', \Cour{e,e''}}$,
    \item $\Cour{e,e} = \frac{1}{2} \langle e,e\rangle z$,
    \item $\langle z,e\rangle\langle e',e''\rangle = \langle \Cour{e,e'},e''\rangle + \langle e', \Cour{e,e''}\rangle$.
\end{enumerate}
If $z=0$, then the above conditions force the bracket to be skew-symmetric (and therefore a Lie bracket) and the bilinear form to be invariant.
\end{example}

\subsection{Degree $2$ contact $\N Q$-manifolds}

In \cite{grabowski:gradedcontact}, it was shown that Courant-Jacobi manifolds are in one-to-one correspondence with degree $2$ contact $\N Q$-manifolds. The argument there was based on symplectization. We outline an alternative approach here, using Theorem \ref{thm:ctctSchw}.

We begin by describing the correspondence in \cite{roytenberg:symplectic} between degree $2$ symplectic $\N$-manifolds and vector bundles $E \to M$ equipped with a nondegenerate bilinear form. Let $E \to M$ be such a vector bundle. Locally, if $\{x^i\}$ are coordinates on $M$ and $\{v_\alpha\}$ is a frame of sections on $E$ with dual frame $\{v^\alpha\}$, then the corresponding degree $2$ symplectic $\N$-manifold $\calN$ has coordinates $\{x^i, v^\alpha, p_i\}$, where $|v^\alpha| = 1$ and $|p_i|=2$.

Writing $g_{\alpha \beta} = \langle v_\alpha, v_\beta \rangle$, and assuming the frame is chosen so that $g_{\alpha \beta}$ is constant, the symplectic form on $\calN$ is given by
\[ \omega = dp_i \wedge dx^i + \frac{1}{2} g_{\alpha \beta} dv^\alpha \wedge dv^\beta.\]

Theorem \ref{thm:ctctSchw} tells us that every degree $2$ contact $\N$-manifold is of the form $\calM = \calN \times \R[2]$, where $\calN$ is as above. The contact form on $\calM$ is
    \begin{equation} \label{eqn:deg2alpha}
    \alpha = \frac{1}{2}\left(i_\epsilon \omega + d\theta\right) = p_i dx^i + \frac{1}{2} g_{\alpha \beta} v^\alpha d v^\beta + \frac{1}{2} d\theta,\end{equation}
    where $\theta$ is the degree $2$ coordinate on $\R[2]$.

Any degree $3$ function $S$ on $\calM$ can be written in the form
\begin{equation} \label{eqn:courantS}
S = a^i_\alpha v^\alpha p_i + b_\alpha v^\alpha \theta - \frac{1}{6} T_{\alpha \beta \gamma} v^\alpha v^\beta v^\gamma, 
\end{equation}
where $T_{\alpha \beta \gamma}$, $a^i_\alpha$, and $b_\alpha$ are functions on $M$. Such a function encodes bracket and anchor data via the equations
    \begin{align*}
        \langle \Cour{v_\alpha, v_\beta}, v_\gamma \rangle &= T_{\alpha \beta \gamma},\\
     \rho(v_\alpha)(f) &= a^i_\alpha \frac{\partial f}{\partial x^i} + b_\alpha f,
    \end{align*}
    and the equation $\{S,S\}_J = 0$ holds if and only if the axioms in Definition \ref{def:contactcourant} hold.

\subsection{AKSZ-contact theory in the $n=2$ case}

The input data for $n=2$ AKSZ-contact theory consists of a compact oriented $3$-manifold $\Sigma$ and a Courant-Jacobi algebroid $E \to M$. The fields are maps of graded manifolds 
\[\phi: T[1]\Sigma \to \mathcal{M}.\]
Locally, a field is given by smooth functions $X^i = \phi^*(x^i)$, $1$-forms $\eta^\alpha = \phi^*(v^\alpha)$, and $2$-forms $P_i = \phi^*(p_i)$ and $\Theta = \phi^*(\theta)$.

From \eqref{eqn:deg2alpha}, we get the kinetic part of the action
\[ \mathcal S_{\mathrm{kin}}(\phi) = \int_{\Sigma} P_i dX^i + \frac{1}{2} g_{\alpha \beta} \eta^\alpha d\eta^\beta + \frac{1}{2} d\Theta,\]
and from \eqref{eqn:courantS}, we get the interaction terms
\[ \mathcal S_{\mathrm{int}}(\phi) = \int_\Sigma \phi^*S = \int_{\Sigma} 
a^i_\alpha \eta^\alpha P_i + b_\alpha \eta^\alpha \Theta - \frac{1}{6} T_{\alpha \beta \gamma} \eta^\alpha \eta^\beta  \eta^\gamma.\]
 Putting the two parts together, we get the following formula for $\mathcal{S}(\phi)$:
\begin{equation}\label{eq: Courant-Jacobi}
 \int_\Sigma P_i  dX^i + \frac{1}{2} g_{\alpha \beta} \eta^\alpha d\eta^\beta + \frac{1}{2} d\Theta
    - a^i_\alpha \eta^\alpha P_i - b_\alpha \eta^\alpha \Theta + \frac{1}{6} T_{\alpha \beta \gamma} \eta^\alpha  \eta^\beta \eta^\gamma.
\end{equation}

The additional term in the BPV action is $-\frac{1}{2}d\Theta$, so the formula for $\mathcal S_{\mathrm{BPV}}(\phi)$ is
\begin{equation}\label{eq: Courant-Jacobi-BPV}
\int_\Sigma P_i dX^i + \frac{1}{2} g_{\alpha \beta} \eta^\alpha  d\eta^\beta - a^i_\alpha \eta^\alpha  P_i - b_\alpha \eta^\alpha \Theta + \frac{1}{6} T_{\alpha \beta \gamma} \eta^\alpha\eta^\beta\eta^\gamma.
\end{equation}
Under the assumption that $\Sigma$ is boundary-free, \eqref{eq: Courant-Jacobi} and \eqref{eq: Courant-Jacobi-BPV} are equal and could be called the \emph{Courant-Jacobi sigma model} action. We expect that it should provide the correct setting for explaining the appearance of Wess-Zumino-Witten terms in, e.g., the twisted Jacobi sigma model of \cite{bpv:jacobitwisted}.

\bibliography{aksz}
\end{document}